\documentclass[11pt]{article}

\usepackage[numbers]{natbib}
\usepackage{amsmath}
\usepackage{amssymb}
\usepackage{amsthm}
\usepackage{xcolor}
\usepackage{array}
\usepackage{booktabs}
\usepackage{tabularx}
\usepackage{hyperref}
\usepackage{algorithm}
\usepackage{algpseudocode}
\usepackage{enumitem}\usepackage[margin=1in]{geometry}

\hypersetup{colorlinks=true,citecolor=blue,linkcolor=blue,hypertexnames=false}
\allowdisplaybreaks

\newcommand{\paperTitle}{Improved Quasi-Polynomial-Time Approximation for Maximum Independent Set}
\newcommand{\paperAuthor}{Zhao Song\thanks{\texttt{magic.linuxkde@gmail.com}.}}

\newcommand{\E}{\mathbb E}
\newcommand{\R}{\mathbb R}

\theoremstyle{plain}
\newtheorem{theorem}{Theorem}[section]
\newtheorem{lemma}[theorem]{Lemma}
\newtheorem{definition}[theorem]{Definition}
\newtheorem{proposition}[theorem]{Proposition}

\theoremstyle{definition}

\begin{document}

\date{October 25, 2026}
\title{\paperTitle}
\author{\paperAuthor}
\maketitle

\begin{abstract}
We give a deterministic quasi-polynomial-time algorithm for the maximum independent set problem. If an $N$-vertex graph contains an independent set of size $2^{-t}N$, where $t$ is an integer with $1\le t\le\sqrt{\log N}/12$, then our algorithm finds an independent set of size $2^t$ in time $N^{O(t)}$. This gives an approximation ratio of $N/2^{\Omega(\sqrt{\log N})}$ in time $N^{O(\sqrt{\log N})}$, improving on the approximation ratio $N/2^{\Omega((\log N)^{1/3})}$ obtained by Lovett~\cite{l26}.

A limitation of this form of analysis follows from the integrality gaps of Tulsiani~\cite{t09}. These gaps imply that a universal guarantee of the form $\operatorname{Las}_{R(t)}(G)\ge2^{-t}N\Longrightarrow\alpha(G)\ge2^t$, at $2^{o(\sqrt{\log N\log\log N})}$ levels, cannot hold over a range extending beyond $t=O(\sqrt{\log N\log\log N})$. Thus, the exponent $1/2$ is the limit of this form of analysis, up to a factor of $\sqrt{\log\log N}$.

\end{abstract}

\newpage

\section{Introduction}
\label{sec:introduction}

An independent set in a graph is a set of vertices containing no edge. Given a graph $G=(V,E)$ on $N$ vertices, the maximum independent set problem asks for an independent set of largest possible size; this size is denoted by $\alpha(G)$. An approximation algorithm with ratio $R(N)$ must output an independent set of size at least $\alpha(G)/R(N)$. Taking the complement of a graph exchanges independent sets and cliques, so approximation guarantees and hardness results for maximum clique transfer directly to maximum independent set. The decision version is NP-complete by graph complementation and Karp's NP-completeness result for clique~\cite{k72}.

Classical polynomial-time approximation algorithms for general graphs achieve a modest improvement over the trivial factor $N$. Boppana and Halld{\'o}rsson~\cite{bh92} obtained an $O(N/(\log N)^2)$-approximation by exploiting the structure of graphs that exclude certain subgraphs. Feige~\cite{f04} improved the approximation ratio to $O(N(\log\log N)^2/(\log N)^3)$. These results show that a polylogarithmic saving over the trivial ratio is possible, but they do not give a saving that is a fixed power of the number of vertices.

The theory of probabilistically checkable proofs provides a framework for proving hardness of approximation~\cite{as98,almss98}. Feige, Goldwasser, Lov{\'a}sz, Safra, and Szegedy~\cite{fglss96} established a connection between interactive proofs and the hardness of approximating clique. H{\aa}stad~\cite{h99} subsequently proved that, for every fixed $\epsilon>0$, a polynomial-time $N^{1-\epsilon}$-approximation would imply $\mathrm{NP}=\mathrm{ZPP}$. Zuckerman~\cite{z07} removed the randomness from this hardness result, proving that achieving the same ratio is NP-hard. Thus, assuming $\mathrm{P}\ne\mathrm{NP}$, polynomial time cannot provide a fixed polynomial saving over the trivial approximation ratio.

Quasi-polynomial time allows a finer range of approximation ratios. For every fixed $a>0$, enumerating all vertex sets of size at most $k:=\lceil(\log N)^a\rceil$ gives an $O(N/(\log N)^a)$-approximation in time $N^{O((\log N)^a)}$. The largest independent set found is a maximum independent set when $\alpha(G)<k$ and has size $k$ otherwise. On the hardness side, Khot~\cite{k01} proved hardness of approximation within $N/2^{(\log N)^{1-\gamma}}$, for some absolute constant $\gamma>0$, assuming that $\mathrm{NP}$ has no zero-error randomized quasi-polynomial-time algorithms. Zuckerman~\cite{z07} obtained the same form of hardness using deterministic quasi-polynomial-time reductions, under the assumption that $\mathrm{NP}$ has no deterministic quasi-polynomial-time algorithms. Khot and Ponnuswami~\cite{kp06} strengthened the hardness factor to $N/2^{(\log N)^{3/4+\epsilon}}$ for every fixed $0<\epsilon<1/4$, assuming that $\mathrm{NP}$ does not have randomized quasi-polynomial-time algorithms with bounded error. Their theorem is stated for polynomial-time approximation algorithms, but their randomized reduction has quasi-polynomial running time and output size. Composing it with a quasi-polynomial-time approximation algorithm would therefore contradict the same assumption.

Recently, Lovett~\cite{l26} gave a deterministic quasi-polynomial-time algorithm with an approximation ratio of $N/2^{\Omega((\log N)^{1/3})}$, improving the polylogarithmic saving obtained by enumeration to a stretched-exponential saving. More precisely, there is an absolute constant $c>0$ such that, for every integer $1\le t\le c(\log N)^{1/3}$, given the promise $\alpha(G)\ge2^{-t}N$, his algorithm finds an independent set of size at least $2^t$ in time $N^{O(t)}$. The resulting approximation algorithm runs in time $N^{O((\log N)^{1/3})}$. This leaves a substantial gap between the exponent in the algorithmic guarantee and the exponent in the hardness result of Khot and Ponnuswami~\cite{kp06}.

Semidefinite relaxations provide another perspective on this problem. Lov{\'a}sz's theta function~\cite{l79} gives a semidefinite upper bound on the independence number, and the hierarchy of Lasserre~\cite{l01} strengthens such relaxations by imposing consistency and positivity on higher moments. The power of this hierarchy for independent set is limited by the integrality gaps of Tulsiani~\cite{t09}: even $2^{\Omega(\sqrt{\log N\log\log N})}$ levels can have a gap of $N/2^{O(\sqrt{\log N\log\log N})}$. These gaps concern the information supplied by the relaxation and are distinct from hardness results for arbitrary approximation algorithms.

\subsection{Our result}
\label{sec:our_results}
We state our main result as follows.
\begin{theorem}[Informal version of Theorem~\ref{thm:main_formal}]
\label{thm:main_informal}
Let $t$ be an integer with $1\le t\le\sqrt{\log N}/12$. If $\alpha(G)\ge 2^{-t}N$, then a deterministic algorithm finds an independent set of size at least $2^t$ in time $N^{O(t)}$. Consequently, $\alpha(G)$ can be approximated within a factor $N/2^{\Omega(\sqrt{\log N})}$ in deterministic time $N^{O(\sqrt{\log N})}$.
\end{theorem}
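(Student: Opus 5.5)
The plan is to go through the Lasserre (sum-of-squares) relaxation, matching the form $\operatorname{Las}_{R(t)}(G)\ge 2^{-t}N\Rightarrow\alpha(G)\ge 2^t$ suggested in the abstract. Since $\alpha(G)\ge 2^{-t}N$, the level-$R$ relaxation with $R=O(t)$ has value at least $2^{-t}N$. It can be solved to sufficient precision in time $N^{O(R)}=N^{O(t)}$. I would then prove, by induction on $t$, a statement of the following shape.

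\emph{Inductive statement:} if a graph on $n$ vertices carries a level-$(Ct)$ pseudo-distribution $\mu$ over independent sets with $\tilde{\E}_\mu|S|\ge 2^{-t}n$, and $n\ge 2^{c t^2}$ for suitable constants $c,C$, then a deterministic procedure finds an independent set of size $2^t$, using conditionings only on events of $O(t)$ vertices. The base case $t=0$ is a single vertex. The algorithm enumerates all $N^{O(t)}$ conditioning choices, so derandomization is automatic.

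The inductive step is a \emph{splitting lemma}, which is the main obstacle. Given $\mu$ of level $R$ with density $\delta=2^{-t}$, I want to use $O(1)$ levels to find two disjoint vertex sets $A,B$ with the following properties:
\begin{itemize}
\item there are no edges between $A$ and $B$;
\item $|A|,|B|\ge 2^{-O(t)}n$;
\item each carries a conditioned pseudo-distribution of level $R-O(1)$ whose expected size is at least $2\delta$ times the size of the set.
\end{itemize}
Applying induction to $A$ and $B$ separately gives independent sets of size $2^{t-1}$ each. Their union is independent because $A$ and $B$ are anticomplete, so it has size $2^t$.

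The per-step vertex loss of $2^{-O(t)}$ compounds over $t$ levels to $2^{-O(t^2)}$. This is why the hypothesis $t\le\sqrt{\log N}/12$ is needed: it keeps every subproblem nonempty, with $N\cdot 2^{-O(t^2)}\ge N^{\Omega(1)}$.

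To prove the splitting lemma I would use the second moment. We have
\[
\sum_{u,v}\tilde{\E}[x_ux_v]=\tilde{\E}|S|^2\ge(\delta n)^2,
\]
so conditioning on $x_u=1$ for a well-chosen $u$ gives a pseudo-distribution supported on $V\setminus N(u)$. I would then case on whether conditioning increases the relative density.

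\emph{Case 1: conditioning increases the density.} Suppose that for many $u$ the conditioned expected size exceeds $2\delta\,|V\setminus N(u)|$. Then I would take $A$ to be the non-neighborhood of one such $u$, and look for a second vertex $v$ whose non-neighborhood yields $B$ disjoint from and anticomplete to $A$. The natural candidate is $B\subseteq N(u)$ intersected with a further conditioning on $x_v=1$, with the edges between $A$ and $B$ removed by conditioning on a second vertex.

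\emph{Case 2: density does not increase for most $u$.} Then the pseudo-distribution is ``spread'': the expected number of neighbors of $S$ that $S$ avoids is close to what a uniform random set would give. In that case I would use the positivity constraints (a Cauchy--Schwarz/pseudo-variance bound on $\sum_{u\sim v}\tilde{\E}[x_u]\tilde{\E}[x_v]$) to find a vertex set of size $2^{-O(t)}n$ that is sparse enough for a greedy/Turán-type argument to finish directly.

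I expect the delicate part to be the quantitative bookkeeping in Case 1. Density must genuinely double while only a $2^{-O(t)}$ fraction of vertices is lost per step. If the doubling is only achieved up to a factor like $2-O(1/t)$, I would instead double every $O(1)$ steps, at a constant-factor cost in the exponent.

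The corollary then follows by choosing $t=\lfloor\sqrt{\log N}/12\rfloor$. If $\alpha(G)\ge 2^{-t}N$ we output $2^t$ vertices, giving ratio at most $N/2^t$. Otherwise $\alpha(G)<2^{-t}N$, and outputting a single vertex gives ratio at most $N/2^{t}$. Either way the ratio is $N/2^{\Omega(\sqrt{\log N})}$, and the time is $N^{O(\sqrt{\log N})}$.
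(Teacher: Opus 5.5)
\paragraph{A genuine gap: the splitting lemma.} Your reduction to a Lasserre relaxation with $O(t)$ levels, the enumeration-based derandomization, and the derivation of the approximation ratio are fine. However, everything of substance sits in the splitting lemma, which you leave unproved, and the sketch you give for it does not work.

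In Case~1, $A=V\setminus N(u)$ contains $u$, and every vertex of $B\subseteq N(u)$ is adjacent to $u$, so $A$ and $B$ are never anticomplete. Discarding $u$ does not help, because nothing controls the edges between $V\setminus N[u]$ and $N(u)$. Conditioning on $x_v=1$ changes the pseudo-distribution (in particular it zeroes the marginals on $N(v)$), but it does not delete edges of $G$. A low-level pseudo-distribution certifies nothing about the absence of edges between two large sets of high conditional marginal.

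In Case~2, a single pseudo-variance inequality is quantitatively too weak. For the normalized covariance $M(u,v)=\mu(u,v)/(\mu(u)\mu(v))-1$ it reads $\sum_{u,v\in U}M(u,v)\ge0$. Edges contribute only $-d(U)|U|^2$, while pairs with $0\le M(u,v)<1/2$ may contribute up to $|U|^2/2$. At edge density $d(U)\approx2^{-t}$ this yields only correlations of order $2^{-t}$, i.e.\ marginal gains by a factor $1+O(2^{-t})$ per conditioning. Raising the marginals from $2^{-t}$ to a constant would then take exponentially many conditionings in $t$. So this inequality cannot show that a set on which conditioning fails to increase the density must be sparse.

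\paragraph{The missing idea and the paper's route.} The missing idea is Lovett's high-moment step. Since $M\succeq0$, the Schur product theorem gives $M^{\circ\ell}\succeq0$, hence $\sum_{u,v\in U}M(u,v)^\ell\ge0$ for odd $\ell$. Choosing $\ell\ge h+2$ makes the small correlations contribute at most $2^{-\ell}\le d(U)/4$. Then $d(U)\ge2^{-h}$ forces a dyadic scale $j\ge-1$ and a vertex $v$ such that conditioning on $v$ multiplies the marginals by $1+2^j$ on at least a $2^{-h-2-(j+1)(\ell+1)}$ fraction of $U$. The scale is selected with geometric weights $2^{-(j+2)}$, so the number of scales costs nothing.

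The paper never splits. It iterates this increment on the single set $U_i=\{u\in V\setminus S_i:\mu_{S_i}(u)\ge\tau_i\}$. It charges the size loss against the gain via $(j+1)(\ell+1)+\Lambda\le K\log(1+2^j)$, with $\Lambda=h+\log(16/3)$ and $K=2(\ell+1)+2\Lambda=O(h)$. After at most $2t+1$ conditionings the set still has size at least $2^{-(K+1)(t+1)}N$, and the process stops once $d(U)\le2^{-h}$.

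With $h=t+1$ and $t\le\sqrt{\log N}/12$, this gives $|U|\ge2^{-O(t^2)}N\ge2^{t+1}$ and $d(U)\le2^{-(t+1)}$. The greedy bound $|U|/(1+d(U)|U|)\ge2^t$ then finishes the proof. No anticomplete pairs are needed, and your proposal as written has no mechanism that produces either anticomplete pairs or sparsity.
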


We make quantitative progress on the first open problem posed by Lovett~\cite[Section~1.3]{l26}, concerning the gap between approximation algorithms and hardness for maximum independent set. We improve the approximation ratio achievable in deterministic quasi-polynomial time from $N/2^{\Omega((\log N)^{1/3})}$ to $N/2^{\Omega(\sqrt{\log N})}$.

\begin{table}[!ht]
\centering 
\begin{tabular}{|l|l|l|l|l|}
\hline
Year & Authors & Reference & Approximation ratio & Time \\
\hline
1992 & Boppana and Halld{\'o}rsson & \cite{bh92} & $O(\frac{N}{(\log N)^2})$ & $N^{O(1)}$ \\
2004 & Feige & \cite{f04} & $O(\frac{N(\log\log N)^2}{(\log N)^3})$ & $N^{O(1)}$ \\
2026 & Lovett & \cite{l26} & ${N}/{2^{\Omega((\log N)^{1/3})}}$ & $N^{O((\log N)^{1/3})}$ \\
2026 & This work & Theorem~\ref{thm:main_informal} & ${N}/{2^{\Omega(\sqrt{\log N})}}$ & $N^{O(\sqrt{\log N})}$ \\
\hline
\end{tabular}
\caption{Approximation guarantees for maximum independent set in general graphs.}
\label{tab:approximation_comparison}
\end{table}

As a consequence of Tulsiani's integrality gaps, Proposition~\ref{prop:lasserre_barrier} shows that the range $t=O(\sqrt{\log N})$ is optimal, up to a factor $\sqrt{\log\log N}$, for universal implications from the value of the Lasserre relaxation to the existence of an independent set of size $2^t$. This limitation applies to implications of the form specified there, at $2^{o(\sqrt{\log N\log\log N})}$ levels.

\section{Technique Overview}
\label{sec:technique_overview}

\subsection{Lovett's Method}
\label{sec:lovett_method}

For integers $t,h\ge1$, we describe Lovett's argument~\cite{l26} in five steps. Here $d(U)$ denotes edge density, and averages over $U$ are taken over independently and uniformly chosen vertices.

\paragraph{Step 1: Initialize.}
For the structural argument, assume that we are given a pseudo-distribution $\mu$ with $\sum_u\mu(u)\ge2^{-t}N$ and sufficiently many levels. Set the initial threshold and vertex set as follows:
\begin{equation}
\label{eq:overview_lovett_initial}
\tau_0:=2^{-(t+1)},\qquad U_0:=\{u:\mu(u)\ge\tau_0\},\qquad |U_0|\ge\tau_0N.
\end{equation}
The size bound follows from the objective bound and $\mu(u)\le1$.

\paragraph{Step 2: Increase the marginals.}
Let $\nu$ be the current conditioned pseudo-distribution. Suppose $\nu(u)\ge\tau$ on $U$ and $d(U)\ge\delta:=2^{-h}$. Set $M(u,v):=\nu(u,v)/(\nu(u)\nu(v))-1$ and $\rho:=\Pr_{u,v\in U}[M(u,v)\ge1/2]$. Choose the smallest odd $\ell\ge h+1$, so $2^{-\ell}\le\delta/2$. Then
\begin{equation}
\label{eq:overview_lovett_moment}
0\le\E_{u,v\in U}[M(u,v)^\ell]\le-d(U)+2^{-\ell}+\rho\tau^{-\ell}\le-\delta/2+\rho\tau^{-\ell}.
\end{equation}
The first step follows from Schur positivity; the second step partitions correlations by sign and magnitude; the third step uses $d(U)\ge\delta$ and $2^{-\ell}\le\delta/2$. Averaging gives a vertex $v$ and a set $W$ with
\begin{equation}
\label{eq:overview_lovett_increment}
|W|\ge\frac{\delta}{2}\tau^\ell|U|,\qquad \nu_v(u)\ge\frac32\nu(u)\quad(u\in W).
\end{equation}
The size bound follows by rearranging Eq.~\eqref{eq:overview_lovett_moment} and averaging. The marginal bound follows from $\nu_v(u)=\nu(u,v)/\nu(v)$ and $M(u,v)\ge1/2$.

\paragraph{Step 3: Iterate.}
The choice of $\ell$ gives $\ell\le h+2$. Put $A:=(t+2)(h+2)$. When $N\ge2^{3(t+2)^2(h+2)}$, deleting $v$ after each conditioning step yields
\begin{equation}
\label{eq:overview_lovett_iteration}
\tau_i=(3/2)^i\tau_0,\qquad |U_i|\ge2^{-(t+1)-Ai}N.
\end{equation}
The threshold formula follows by iterating the gain $3/2$; the size bound uses Eq.~\eqref{eq:overview_lovett_increment}, $\tau_i\ge\tau_0$, and a factor of two to account for deleting $v$. Since $\tau_i\le1$, fewer than $2(t+1)$ steps suffice, yielding $|U|\ge2^{-O(t^2h)}N$ and $d(U)\le2^{-h}$.

\paragraph{Step 4: Enumerate.}
For $\mu(S)>0$, the candidates are
\begin{equation}
\label{eq:overview_lovett_candidates}
U_S:=\{u\in V\setminus S:\mu_S(u)\ge(3/2)^{|S|}2^{-(t+1)}\},\qquad |S|=O(t).
\end{equation}
An SDP supplies feasible pseudo-marginals whose objective value is at least half the target. This slack is accounted for by replacing $t$ by $t+1$. Enumerating $S$ takes $N^{O(t)}$ time. For small $N$, a singleton suffices for the sparse-set bound.

\paragraph{Step 5: Extract an independent set.}
Set $h=t+1$. For $t\le c(\log N)^{1/3}$, choosing $c$ sufficiently small ensures $|U|\ge2^{t+1}$. Greedy extraction gives
\begin{equation}
\label{eq:overview_lovett_greedy}
|I|\ge\frac{|U|}{1+d(U)|U|}=\frac{1}{|U|^{-1}+d(U)}\ge2^t.
\end{equation}
The first step is the greedy bound; the second step divides by $|U|$; the third step uses the size and density bounds.

\subsection{Our Techniques}
\label{sec:our_techniques}

The main changes to the argument in Section~\ref{sec:lovett_method} occur in Steps 2 and 3: selecting a dyadic correlation scale and charging vertex loss to marginal gain improve the sparse-set exponent from $O(t^2h)$ to $O(th)$. Step 4 adapts the enumeration to variable thresholds; Steps 1 and 5 are unchanged.

\paragraph{Step 1: Reuse the initialization.}
This step is unchanged: Eq.~\eqref{eq:overview_lovett_initial} supplies the initial threshold and set. The same normalized covariance matrix and Schur positivity remain available. Taking high entrywise powers requires no additional SDP levels; only conditioning consumes levels.

\paragraph{Step 2: Select a dyadic scale using geometric weights.}
Instead of the uniform bound $M(u,v)^\ell\le\tau^{-\ell}$ in Eq.~\eqref{eq:overview_lovett_moment}, keep track of the contributions from individual dyadic scales. Let $\rho_j:=\Pr_{u,v\in U}[2^j\le M(u,v)<2^{j+1}]$, where $-1\le j\le\lceil\log(1/\tau)\rceil-1$. Choose the smallest odd $\ell\ge h+2$, so $2^{-\ell}\le\delta/4$. Then
\[
\sum_j\rho_j2^{(j+1)\ell}\ge d(U)-2^{-\ell}\ge3\delta/4.
\]
The first step uses the same moment partition and positivity; the second step uses $d(U)\ge\delta$ and the choice of $\ell$. Assign weights $w_j:=2^{-(j+2)}$, whose sum is less than $1$. Some class contributes at least $(3\delta/4)w_j$ to this moment sum. Averaging within that class replaces Eq.~\eqref{eq:overview_lovett_increment} by
\[
|W|\ge\frac{3\delta}{8}2^{-(j+1)(\ell+1)}|U|,\qquad \nu_v(u)\ge(1+2^j)\nu(u)\quad(u\in W).
\]
The size bound follows by substituting $w_j$ and dividing by $2^{(j+1)\ell}$; the marginal bound follows from the conditioning identity and $M(u,v)\ge2^j$. Geometric weights avoid a loss depending on the number of scales.

\paragraph{Step 3: Charge vertex loss to marginal gain.}
Put $\Lambda:=h+\log(16/3)$, $c_i:=(j_i+1)(\ell+1)+\Lambda$, and $K:=2(\ell+1)+2\Lambda=O(h)$. Here $c_i$ includes the loss from deleting the conditioned vertex. Set $\tau_{i+1}:=(1+2^{j_i})\tau_i$ and let $m$ be the number of conditioning steps. Lemma~\ref{lem:cost_per_bit} replaces the fixed cost in Eq.~\eqref{eq:overview_lovett_iteration} by
\[
\sum_{i<m}c_i\le K\sum_{i<m}\log(1+2^{j_i})=K\log(\tau_m/\tau_0)\le K(t+1).
\]
The first step sums the bounds from Lemma~\ref{lem:cost_per_bit}, the second step uses the threshold product, and the third step uses $\tau_m\le1$. For $N\ge2^{(2K+1)(t+1)}$, Lemma~\ref{lem:iterated_conditioning} retains at least $2^{-(K+1)(t+1)}N$ vertices. Every gain is at least $3/2$, so sparsity is reached after $O(t)$ steps. Lemma~\ref{lem:large_sparse_set} accounts for the singleton case and the slack in the objective, giving $|U|\ge2^{-O(th)}N$.

\paragraph{Step 4: Enumerate conditioning sets and thresholds.}
The SDP setup and shift $t\mapsto t+1$ are unchanged. We again require $\mu(S)>0$. The threshold now depends on the gains, so Eq.~\eqref{eq:overview_lovett_candidates} is replaced by
\[
U_{S,w}:=\{u\in V\setminus S:\mu_S(u)\ge\mu_S(w)\},\qquad |S|\le2t+3,\quad w\in V\setminus S.
\]
Choosing a vertex $w$ with minimum marginal in the desired set recovers that set exactly. Enumerating both $S$ and $w$ preserves the running time $N^{O(t)}$ for $h=O(t)$.

\paragraph{Step 5: Apply greedy extraction with the improved size bound.}
The greedy bound in Eq.~\eqref{eq:overview_lovett_greedy} is unchanged. With $h=t+1$, the improved size estimate becomes
\[
|U|\ge2^{-O(t^2)}N,\qquad d(U)\le2^{-(t+1)}.
\]
The proof of Theorem~\ref{thm:main_formal} establishes $|U|\ge2^{t+1}$ for $t\le\sqrt{\log N}/12$. The resulting independent set has size at least $2^t$, giving an approximation ratio of $N/2^{\Omega(\sqrt{\log N})}$ in quasi-polynomial time.

\section{Preliminaries}
\label{sec:preliminaries}

Throughout this paper, logarithms are to base $2$. For a vector $x$, we use $x^\top$ to denote its transpose. We use $\circ$ to denote Hadamard product. For real symmetric matrices $A$ and $B$ of the same size, we write $A\preceq B$ if $B-A$ is positive semidefinite.

For a nonempty set $U\subseteq V$, define its edge density by
\begin{equation}
\label{eq:edge_density}
d(U):=\frac{|\{(u,v)\in U\times U:uv\in E\}|}{|U|^2}=\frac{2|E(G[U])|}{|U|^2}.
\end{equation}
The first step defines $d(U)$, and the second step follows because each edge contributes two ordered pairs. For an integer $k\ge0$, let $\mathcal{V}_k:=\{A\subseteq V:|A|\le k\}$.

\begin{definition}[Pseudo-distribution]
\label{def:pseudo_distribution}
Let $r\ge1$ be an integer. A level-$r$ pseudo-distribution is a function $\nu:\mathcal{V}_{2r}\to\R$ such that
\begin{enumerate}[label=(\alph*)]
\item $\nu(\emptyset)=1$;
\item $\nu(T)=0$ for every $T\in\mathcal{V}_{2r}$ that contains an edge of $G$;
\item the moment matrix $M_r[\nu]:=(\nu(A\cup B))_{A,B\in\mathcal{V}_r}$ is positive semidefinite.
\end{enumerate}
We write $\nu(u):=\nu(\{u\})$ and $\nu(u,v):=\nu(\{u,v\})$, so that $\nu(u,u)=\nu(u)$.
\end{definition}

Singleton marginals of a level-$r$ pseudo-distribution lie in $[0,1]$ for $r\ge1$. For $r\ge2$, pairwise marginals also satisfy the bounds below. Adjacent vertices have zero joint marginal at every level $r\ge1$.

\begin{lemma}[Basic inequalities]
\label{lem:basic_inequalities}
Let $\nu$ be a level-$r$ pseudo-distribution.
\begin{enumerate}[label=(\alph*)]
\item If $r\ge1$, then $0\le\nu(u)\le1$ for every $u\in V$.
\item If $r\ge2$, then $0\le\nu(u,v)\le\min\{\nu(u),\nu(v)\}$ for all $u,v\in V$, and $\nu(u,v)=0$ whenever $uv\in E$.
\end{enumerate}
\end{lemma}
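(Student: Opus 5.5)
The plan is to read off every inequality from nonnegativity of suitable principal minors of the moment matrix $M_r[\nu]$. A principal submatrix of a positive semidefinite matrix is positive semidefinite. In particular its diagonal entries are nonnegative and its $2\times2$ principal minors are nonnegative, and these two facts, together with axioms (a) and (b) of Definition~\ref{def:pseudo_distribution}, are all we need.

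For part (a), take $r\ge1$. The rows and columns of $M_r[\nu]$ indexed by $\emptyset$ and $\{u\}$ are available, since both sets lie in $\mathcal{V}_1\subseteq\mathcal{V}_r$. The diagonal entry at $\{u\}$ is $\nu(\{u\}\cup\{u\})=\nu(u)\ge0$. The $2\times2$ principal submatrix on $\{\emptyset,\{u\}\}$ is
\[
\begin{pmatrix}1&\nu(u)\\ \nu(u)&\nu(u)\end{pmatrix},
\]
using $\nu(\emptyset)=1$. Its determinant is $\nu(u)-\nu(u)^2\ge0$, and together with $\nu(u)\ge0$ this gives $\nu(u)\le1$.

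For part (b), take $r\ge2$. If $u=v$, the claim reduces to part (a) and the convention $\nu(u,u)=\nu(u)$, so assume $u\ne v$. The set $\{u,v\}$ lies in $\mathcal{V}_2\subseteq\mathcal{V}_r$, so its diagonal entry gives $\nu(u,v)\ge0$. Now consider the principal submatrix on $\{\{u\},\{u,v\}\}$, which is
\[
\begin{pmatrix}\nu(u)&\nu(u,v)\\ \nu(u,v)&\nu(u,v)\end{pmatrix}.
\]
Its determinant is $\nu(u,v)\bigl(\nu(u)-\nu(u,v)\bigr)\ge0$. If $\nu(u,v)>0$, dividing by it yields $\nu(u,v)\le\nu(u)$. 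If $\nu(u,v)=0$, the bound $\nu(u,v)\le\nu(u)$ holds directly by part (a). By symmetry the same argument gives $\nu(u,v)\le\nu(v)$. Finally, if $uv\in E$, then $\{u,v\}\in\mathcal{V}_{2r}$ contains an edge, so axiom (b) gives $\nu(u,v)=0$. This last step needs only $r\ge1$, since $|\{u,v\}|=2\le2r$.

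There is no real obstacle here. The only points needing care are that the index sets used lie in $\mathcal{V}_r$, which is exactly why part (b) needs $r\ge2$, and the degenerate case $\nu(u,v)=0$ when dividing the determinant inequality.
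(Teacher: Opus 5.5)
Your proof is correct and follows the paper's argument: it uses the same principal submatrices indexed by $\emptyset,\{u\}$ and by $\{u\},\{u,v\}$, together with axiom (b) for edges. The only difference is cosmetic: the paper evaluates the quadratic form at the vector with entries $1$ and $-1$ on $\{u\}$ and $\{u,v\}$. This gives $\nu(u)-\nu(u,v)\ge0$ directly, without your case split on whether $\nu(u,v)=0$.
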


\begin{proof}
(a) The principal submatrix of $M_r[\nu]$ indexed by $\emptyset,\{u\}$ is
\[
\begin{pmatrix}1&\nu(u)\\\nu(u)&\nu(u)\end{pmatrix},
\]
by Definition~\ref{def:pseudo_distribution}-(a). It is positive semidefinite, so $\nu(u)\ge0$ and $\nu(u)-\nu(u)^2\ge0$.

(b) If $u=v$, the claim follows from (a). Let $u\ne v$, so that $\{u\},\{u,v\}\in\mathcal{V}_r$. The diagonal entry of $M_r[\nu]$ at $\{u,v\}$ is $\nu(u,v)$, so $\nu(u,v)\ge0$. Let $x$ be the vector with $x_{\{u\}}:=1$, $x_{\{u,v\}}:=-1$, and all other entries zero. Then
\[
0\le x^\top M_r[\nu]x=\nu(u)-2\nu(u,v)+\nu(u,v)=\nu(u)-\nu(u,v).
\]
The first step follows from Definition~\ref{def:pseudo_distribution}-(c), the second step expands the quadratic form, and the third step collects terms. By symmetry, $\nu(u,v)\le\nu(v)$ also holds. The final assertion follows from Definition~\ref{def:pseudo_distribution}-(b).
\end{proof}

Conditioning on the inclusion of a set with positive marginal uses one level per conditioned vertex. The resulting marginals remain feasible and satisfy the usual identities for successive conditioning.

\begin{lemma}[Conditioning]
\label{lem:conditioning}
Let $\nu$ be a level-$r$ pseudo-distribution, and let $S\subseteq V$ satisfy $|S|\le r-1$ and $\nu(S)>0$. For $T\in\mathcal{V}_{2(r-|S|)}$ define
\begin{equation}
\label{eq:conditional_marginal}
\nu_S(T):=\frac{\nu(S\cup T)}{\nu(S)}.
\end{equation}
Then $\nu_S$ is a level-$(r-|S|)$ pseudo-distribution. Moreover, if $v\in V\setminus S$ satisfies $\nu_S(v)>0$, then $\nu(S\cup\{v\})=\nu(S)\nu_S(v)>0$. If in addition $|S|\le r-2$, then for every $u\in V$,
\begin{equation}
\label{eq:successive_conditioning}
\nu_{S\cup\{v\}}(u)=\frac{\nu_S(u,v)}{\nu_S(v)}.
\end{equation}
\end{lemma}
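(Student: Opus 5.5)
We verify the three axioms of Definition~\ref{def:pseudo_distribution} for $\nu_S$ at level $r':=r-|S|\ge1$, then derive the two identities directly from Eq.~\eqref{eq:conditional_marginal}. First, $\nu_S$ is well defined: for $T\in\mathcal{V}_{2r'}$ we have $|S\cup T|\le|S|+2r-2|S|\le 2r$, so $\nu(S\cup T)$ is defined, and $\nu(S)>0$. Axiom (a) holds because $\nu_S(\emptyset)=\nu(S)/\nu(S)=1$. For axiom (b), if $T$ contains an edge then so does $S\cup T$, which lies in $\mathcal{V}_{2r}$, so $\nu(S\cup T)=0$.

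The only substantive step is positivity (c). The plan is to realize $M_{r'}[\nu_S]$ as a scaled principal-type submatrix of $M_r[\nu]$ via a linear map. For $A\in\mathcal{V}_{r'}$, the set $S\cup A$ has size at most $|S|+r-|S|=r$, so $S\cup A\in\mathcal{V}_r$. Define the $0/1$ matrix $P$ with rows indexed by $\mathcal{V}_{r'}$ and columns by $\mathcal{V}_r$, and set $P_{A,B}=1$ iff $B=S\cup A$. Then
\[
(PM_r[\nu]P^\top)_{A,A'}=\nu\bigl((S\cup A)\cup(S\cup A')\bigr)=\nu(S\cup A\cup A')=\nu(S)\,\nu_S(A\cup A').
\]
Hence $M_{r'}[\nu_S]=\nu(S)^{-1}PM_r[\nu]P^\top\succeq0$. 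Two distinct $A$ may map to the same $S\cup A$ when $A\cap S\ne\emptyset$, which gives repeated rows. This is harmless because congruence by an arbitrary matrix preserves positive semidefiniteness, since $x^\top PMP^\top x=(P^\top x)^\top M(P^\top x)\ge0$. That completes (c).

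For the identities, take $v\in V\setminus S$ with $\nu_S(v)>0$. The equality $\nu(S\cup\{v\})=\nu(S)\nu_S(v)$ is just Eq.~\eqref{eq:conditional_marginal} with $T=\{v\}$, and this is admissible since $r'\ge1$ gives $\{v\}\in\mathcal{V}_{2r'}$. Positivity then follows as a product of positive numbers. If $|S|\le r-2$, then $|S\cup\{v\}|\le r-1$, so $\nu_{S\cup\{v\}}$ is defined by the first part. Moreover $\{u,v\}\in\mathcal{V}_{2r'}$, since $r'\ge2$. We compute
\[
\nu_{S\cup\{v\}}(u)=\frac{\nu(S\cup\{u,v\})}{\nu(S\cup\{v\})}=\frac{\nu(S)\nu_S(u,v)}{\nu(S)\nu_S(v)}=\frac{\nu_S(u,v)}{\nu_S(v)}.
\]
This holds also when $u\in S\cup\{v\}$, because the set unions absorb the repetition.

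The main care point is the size bookkeeping: we must check that every set fed to $\nu$ lies in $\mathcal{V}_{2r}$ and every row index lies in $\mathcal{V}_r$. The congruence argument itself is routine.
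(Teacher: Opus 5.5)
Your proposal is correct and follows essentially the same route as the paper: your congruence $M_{r'}[\nu_S]=\nu(S)^{-1}PM_r[\nu]P^\top$ is exactly the paper's map $\phi(A)=A\cup S$, with $y=P^\top x$ doing the same grouping of indices that share an image. The size bookkeeping, the verification of axioms (a) and (b), and the two identities also match the paper's argument.
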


\begin{proof}
The conditional marginals are well defined because
\[
|S\cup T|\le |S|+2(r-|S|)\le2r.
\]
The first step uses $|S\cup T|\le|S|+|T|$ and the bound $|T|\le2(r-|S|)$, and the second step uses $|S|\ge0$. The condition in Definition~\ref{def:pseudo_distribution}-(a) is immediate. For Definition~\ref{def:pseudo_distribution}-(b), if $T$ contains an edge, then so does $S\cup T$, so $\nu_S(T)=0$.

For Definition~\ref{def:pseudo_distribution}-(c), let $r':=r-|S|$ and $x\in\R^{\mathcal{V}_{r'}}$. The map $\phi(A):=A\cup S$ sends $\mathcal{V}_{r'}$ into $\mathcal{V}_r$. Define $y\in\R^{\mathcal{V}_r}$ by $y_C:=\sum_{A\in\mathcal{V}_{r'}:\phi(A)=C}x_A$. Since $\phi(A)\cup\phi(B)=A\cup B\cup S$,
\[
x^\top M_{r'}[\nu_S]x=\frac{1}{\nu(S)}\sum_{A\in\mathcal{V}_{r'}}\sum_{B\in\mathcal{V}_{r'}}x_Ax_B\nu(\phi(A)\cup\phi(B))=\frac{1}{\nu(S)}y^\top M_r[\nu]y\ge0.
\]
The first step substitutes Eq.~\eqref{eq:conditional_marginal} into the quadratic form; the second step groups indices with the same image under $\phi$ and uses the definition of $y$; and the third step follows from Definition~\ref{def:pseudo_distribution}-(c) and $\nu(S)>0$.

The identity $\nu(S\cup\{v\})=\nu(S)\nu_S(v)$ follows from the definition of $\nu_S(v)$. Under the additional assumption $|S|\le r-2$, the pseudo-distribution $\nu_{S\cup\{v\}}$ has at least one level. Finally,
\[
\nu_{S\cup\{v\}}(u)=\frac{\nu(S\cup\{u,v\})}{\nu(S\cup\{v\})}=\frac{\nu(S\cup\{u,v\})/\nu(S)}{\nu(S\cup\{v\})/\nu(S)}=\frac{\nu_S(u,v)}{\nu_S(v)}.
\]
The first step is the definition of conditioning; the second step divides numerator and denominator by the positive number $\nu(S)$; and the third step uses Eq.~\eqref{eq:conditional_marginal}. The pair $\{u,v\}$ lies in $\mathcal{V}_{2(r-|S|)}$, as required.
\end{proof}

The normalized covariance matrix of a pseudo-distribution is positive semidefinite, and its entrywise powers have nonnegative sums. For odd powers, this forces positive correlations to compensate for the negative contribution of edges.

\begin{lemma}[Odd moments of the normalized covariance]
\label{lem:odd_moments}
Let $\nu$ be a level-$r$ pseudo-distribution with $r\ge1$. Let $U\subseteq V$ be nonempty with $\nu(u)>0$ for every $u\in U$. For $u,v\in U$ define
\begin{equation}
\label{eq:normalized_covariance}
M(u,v):=\frac{\nu(u,v)}{\nu(u)\nu(v)}-1.
\end{equation}
Then the matrix $M=(M(u,v))_{u,v\in U}$ is positive semidefinite. Moreover, for every positive integer $\ell$,
\begin{equation}
\label{eq:nonnegative_correlation_moment}
\sum_{u\in U}\sum_{v\in U}M(u,v)^\ell\ge0.
\end{equation}
\end{lemma}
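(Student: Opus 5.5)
We first show that $M$ is positive semidefinite by exhibiting it as a congruence of the centered second-moment matrix, and then deduce Eq.~\eqref{eq:nonnegative_correlation_moment} from the Schur product theorem.

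\emph{Centered moment matrix.} Consider the principal submatrix of $M_r[\nu]$ indexed by $\emptyset$ and the singletons $\{u\}$, $u\in U$. This needs only $r\ge1$, and its entries are $\nu(\emptyset)=1$, $\nu(u)$, and $\nu(\{u\}\cup\{v\})=\nu(u,v)$. Note that $\nu(u,u)=\nu(u)$ on the diagonal. Since $|\{u,v\}|\le 2\le 2r$, these values are defined. Write this submatrix in block form
\[
P=\begin{pmatrix}1&p^\top\\ p&Q\end{pmatrix}\succeq0,\qquad p_u:=\nu(u),\quad Q_{uv}:=\nu(u,v).
\]
Its Schur complement $Q-pp^\top$ is positive semidefinite. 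To see this directly, test $P$ against the vector $(-p^\top x,\,x)$ for arbitrary $x\in\R^U$: this gives $x^\top Qx-(p^\top x)^2\ge0$. The entries of $Q-pp^\top$ are $\nu(u,v)-\nu(u)\nu(v)$.

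\emph{Normalization.} Let $D$ be the diagonal matrix with $D_{uu}:=1/\nu(u)$, which is well defined because $\nu(u)>0$ on $U$. Then
\[
M=D(Q-pp^\top)D,
\]
and entrywise this reads $(\nu(u,v)-\nu(u)\nu(v))/(\nu(u)\nu(v))=M(u,v)$. A congruence of a positive semidefinite matrix is positive semidefinite, so $M\succeq0$.

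\emph{Moments.} By the Schur product theorem, the Hadamard power $M^{\circ\ell}=M\circ\cdots\circ M$ is positive semidefinite for every positive integer $\ell$, by induction on $\ell$. Testing $M^{\circ\ell}$ against the all-ones vector $\mathbf 1\in\R^U$ gives
\[
\sum_{u\in U}\sum_{v\in U}M(u,v)^\ell=\mathbf 1^\top M^{\circ\ell}\mathbf 1\ge0,
\]
which is Eq.~\eqref{eq:nonnegative_correlation_moment}.

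The only point requiring care is confirming that level $r\ge1$ suffices. Every index used lies in $\mathcal V_1\subseteq\mathcal V_r$, and every entry involves a set of size at most $2\le 2r$, so no pairwise bounds from Lemma~\ref{lem:basic_inequalities}(b) are needed.
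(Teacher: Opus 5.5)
Your proof is correct and is essentially the paper's argument. The paper tests $M_r[\nu]$ directly against the vector with $b_{\{u\}}=a_u/\nu(u)$ and $b_\emptyset=-\sum_{u\in U}a_u$, which is exactly your test vector $(-p^\top x,x)$ with $x=Da$; your Schur-complement-plus-congruence factorization is the same computation split into two steps, followed by the same Schur product and all-ones-vector argument.
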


\begin{proof}
Let $a\in\R^U$. Define $b\in\R^{\mathcal{V}_r}$ by $b_{\{u\}}:=a_u/\nu(u)$ for $u\in U$, $b_\emptyset:=-\sum_{u\in U}a_u$, and $b_A:=0$ otherwise. By Definition~\ref{def:pseudo_distribution}-(a) and Definition~\ref{def:pseudo_distribution}-(c),
\[
0\le b^\top M_r[\nu]b=\sum_{u\in U}\sum_{v\in U}\frac{a_ua_v\nu(u,v)}{\nu(u)\nu(v)}-2(\sum_{u\in U}a_u)^2+(\sum_{u\in U}a_u)^2=a^\top Ma.
\]
The first step follows from Definition~\ref{def:pseudo_distribution}-(c); the second step expands the quadratic form using the definition of $b$ and Definition~\ref{def:pseudo_distribution}-(a); and the third step collects terms and uses Eq.~\eqref{eq:normalized_covariance}. Thus $M$ is positive semidefinite.

Next, by the Schur product theorem~\cite{s11}, the entrywise product of two positive semidefinite matrices is positive semidefinite. To see this, write two such $m\times m$ matrices as $P=\sum_{i=1}^m p_ip_i^\top$ and $Q=\sum_{k=1}^m q_kq_k^\top$. Then
\[
P\circ Q=\sum_{i=1}^m\sum_{k=1}^m(p_i\circ q_k)(p_i\circ q_k)^\top\succeq0.
\]
The first step expands the entrywise product, and the second step follows because each summand is positive semidefinite. By induction, the entrywise power $M^{\circ\ell}$ is positive semidefinite. Taking $\boldsymbol{1}$ to be the all-ones vector,
\[
\sum_{u\in U}\sum_{v\in U}M(u,v)^\ell=\boldsymbol{1}^\top M^{\circ\ell}\boldsymbol{1}\ge0.
\]
The first step expands the quadratic form, and the second step uses $M^{\circ\ell}\succeq0$.
\end{proof}

The minimum-degree greedy algorithm returns an independent set whose size is bounded below in terms of the graph's average degree. This will turn the sparse induced subgraphs constructed later into large independent sets.

\begin{lemma}[Greedy]
\label{lem:greedy}
Let $H$ be a graph on $n\ge1$ vertices. Repeat the following until $H$ is empty: pick a vertex $v$ of minimum degree, add it to $I$, and delete $v$ together with its neighbors. The resulting $I$ is an independent set with $|I|\ge n/(1+2|E(H)|/n)$, where $H$ in this bound denotes the original graph.
\end{lemma}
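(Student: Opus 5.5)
The proof has two parts: first show that the output $I$ is independent, then lower-bound $|I|$ by Jensen's inequality applied to the sum of reciprocals of degrees plus one.

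\textbf{Independence.} Whenever a vertex $v$ is added to $I$, all of its neighbours in the current graph are deleted along with it. Hence no later-chosen vertex is adjacent to $v$, and $I$ is independent in $H$.

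\textbf{Charging.} Let $v_1,\dots,v_k$ be the chosen vertices, where $k=|I|$. Let $D_i$ be the set removed in round $i$, namely $v_i$ together with its neighbours in the current graph $H_i$. The sets $D_i$ partition $V(H)$, so $\sum_i|D_i|=n$. Give each vertex $x\in D_i$ the weight $1/|D_i|$; then the total weight over all vertices equals $k$.

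\textbf{Comparing weights with degrees.} Fix $x\in D_i$. Since $v_i$ has minimum degree in $H_i$, we have $\deg_{H_i}(x)\ge \deg_{H_i}(v_i)=|D_i|-1$. Degrees in the induced subgraph $H_i$ are at most degrees in $H$, so $\deg_H(x)+1\ge |D_i|$. It follows that
\[
k=\sum_x \frac{1}{|D_{i(x)}|}\ge\sum_{x\in V(H)}\frac{1}{\deg_H(x)+1},
\]
which is the Caro--Wei type bound.

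\textbf{Conclusion via convexity.} The map $y\mapsto 1/(y+1)$ is convex on $y\ge 0$, so by Jensen's inequality
\[
\sum_x \frac{1}{\deg_H(x)+1}\ge \frac{n}{1+\bar d},
\]
where $\bar d=\frac1n\sum_x\deg_H(x)=2|E(H)|/n$. This gives $|I|\ge n/(1+2|E(H)|/n)$, as claimed.

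\textbf{Main obstacle.} The key step is the per-vertex comparison $|D_i|\le \deg_H(x)+1$. It relies both on the minimum-degree choice of $v_i$ in the current graph and on the monotonicity of degrees under taking induced subgraphs. Everything else is routine.
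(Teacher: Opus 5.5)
Your proof is correct and is essentially the paper's argument. The paper tracks the potential $\Phi(H)=\sum_w 1/(\deg_H(w)+1)$ on the current graph and shows it drops by at most $1$ per round, using the minimum-degree choice and the fact that degrees of surviving vertices do not increase; this is your per-vertex comparison $|D_i|\le\deg_H(x)+1$ summed in a different order, and both arguments finish with the same Jensen step for $y\mapsto 1/(y+1)$.
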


\begin{proof}
The set $I$ is independent because the neighbors of each chosen vertex are deleted before the next choice. Let $\Phi(H):=\sum_{w\in V(H)}1/(\deg_H(w)+1)$, and let $N_H[v]$ denote the closed neighborhood of $v$. Consider one round, which removes $N_H[v]$ for a vertex $v$ of minimum degree. Every $w\in N_H[v]$ has $\deg_H(w)\ge\deg_H(v)$, so the removed terms of $\Phi$ sum to at most $|N_H[v]|/(\deg_H(v)+1)=1$. The degrees of the remaining vertices do not increase, so their contributions do not decrease. Hence $\Phi$ drops by at most $1$ per round. It is $0$ at the end, so the number of rounds is at least the initial value $\Phi(H)$. Since $x\mapsto1/(x+1)$ is convex on $[0,\infty)$, Jensen's inequality gives $\Phi(H)\ge n/(1+D)$, where $D:=n^{-1}\sum_{w\in V(H)}\deg_H(w)=2|E(H)|/n$ is the average degree of the original graph.
\end{proof}

\section{Proof of the main theorem}
\label{sec:proof_main}

We prove Theorem~\ref{thm:main_formal} by finding a large sparse induced subgraph and applying the greedy algorithm. Section~\ref{sec:dyadic_increment} establishes a dyadic correlation increment that increases the marginals after conditioning on a vertex. Section~\ref{sec:cost_per_bit} bounds the logarithmic loss in the number of retained vertices in terms of the logarithmic gain in the marginal threshold. Section~\ref{sec:iterated_conditioning} combines these estimates to obtain a large sparse set after $O(t)$ conditioning steps. Section~\ref{sec:compute_pseudo_distribution} constructs a feasible pseudo-distribution with rational entries in time $N^{O(r)}$. Section~\ref{sec:large_sparse_set} finds a sparse set deterministically by enumerating conditioning sets and thresholds. Section~\ref{sec:complete_main_proof} chooses the parameters and applies the greedy algorithm to prove the main theorem and its approximation guarantee. Finally, Section~\ref{sec:lasserre_limit} applies Tulsiani's integrality gaps to bound the parameter range in which the underlying implication from the Lasserre value to the independent-set size can hold for every graph.

\subsection{Dyadic increment}
\label{sec:dyadic_increment}

The following lemma shows that conditioning on a suitable vertex increases the marginals on a subset of a dense set. It bounds the size of this subset in terms of the correlation scale.

\begin{lemma}[Dyadic increment]
\label{lem:dyadic_increment}
Let $\nu$ be a level-$r$ pseudo-distribution with $r\ge 2$. Let $U\subseteq V$ be nonempty, and let $\tau,\delta\in(0,1]$ satisfy $\nu(u)\ge\tau$ for every $u\in U$ and $d(U)\ge\delta$. Let $\ell$ be an odd positive integer with $2^{-\ell}\le\delta/4$, and let $J:=\lceil\log(1/\tau)\rceil$. Then there are an integer $j$ with $-1\le j\le J-1$ and a vertex $v\in U$ such that
\begin{equation}
\label{eq:dyadic_increment}
\big|\{u\in U:\nu(u,v)\ge(1+2^j)\nu(u)\nu(v)\}\big|\ge\frac{3\delta}{8}2^{-(j+1)(\ell+1)}|U|.
\end{equation}
\end{lemma}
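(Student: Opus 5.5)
We apply Lemma~\ref{lem:odd_moments} to $\nu$ and $U$; all marginals on $U$ are at least $\tau>0$, so the normalized covariance $M(u,v)$ from Eq.~\eqref{eq:normalized_covariance} is defined. Since $\ell$ is odd, Eq.~\eqref{eq:nonnegative_correlation_moment} gives $\E_{u,v\in U}[M(u,v)^\ell]\ge0$, and we split this average into three kinds of pairs.

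First, if $uv\in E$, then $\nu(u,v)=0$ by Lemma~\ref{lem:basic_inequalities}(b). Hence $M(u,v)=-1$, and these pairs contribute exactly $-d(U)$. Second, all other pairs with $M(u,v)<1/2$ have $M(u,v)\in[-1,1/2)$ (since $\nu(u,v)\ge0$). Those with $M\le0$ contribute nonpositively because $\ell$ is odd, and those with $0<M<1/2$ contribute at most $2^{-\ell}$ each. Third, for pairs with $M(u,v)\ge1/2$, we use the upper bound $\nu(u,v)\le\min\{\nu(u),\nu(v)\}$ from Lemma~\ref{lem:basic_inequalities}(b). This gives $M(u,v)+1\le1/\max\{\nu(u),\nu(v)\}\le1/\tau$, so $M(u,v)<2^{J}$. (For $u=v$ we get $M(u,u)=1/\nu(u)-1<1/\tau$, so it is covered by the same bound.) Such a pair therefore lies in a dyadic class $2^j\le M<2^{j+1}$ with $-1\le j\le J-1$. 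Setting $\rho_j:=\Pr_{u,v\in U}[2^j\le M(u,v)<2^{j+1}]$, we obtain
\[
\sum_{j=-1}^{J-1}\rho_j2^{(j+1)\ell}\ge d(U)-2^{-\ell}\ge\delta-\delta/4=\frac{3\delta}{4}.
\]

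Next we select a scale with geometric weights. Put $w_j:=2^{-(j+2)}$; then $\sum_{j\ge-1}w_j=1$, so the finite sum over $-1\le j\le J-1$ is strictly less than $1$. If $\rho_j2^{(j+1)\ell}<\frac{3\delta}{4}w_j$ held for every $j$, summing would contradict the display above. So some $j$ satisfies
\[
\rho_j\ge\frac{3\delta}{4}2^{-(j+2)}2^{-(j+1)\ell}=\frac{3\delta}{8}2^{-(j+1)(\ell+1)}.
\]

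Finally we average over $v$. Since $\rho_j$ is the average over $v\in U$ of the fraction of $u\in U$ with $M(u,v)\in[2^j,2^{j+1})$, some $v\in U$ has at least $\rho_j|U|$ such $u$. Each such $u$ satisfies $M(u,v)\ge2^j$, which is exactly $\nu(u,v)\ge(1+2^j)\nu(u)\nu(v)$, and this yields Eq.~\eqref{eq:dyadic_increment}.

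The main obstacle is the bookkeeping in the moment partition. We must confirm that edges give exactly $M=-1$, that the middle range contributes at most $2^{-\ell}$, and that the top scale is capped at $J-1$ using $\nu(u,v)\le\nu(u)$ together with $\tau\le1$. The weight computation itself is routine once the partition is in place.
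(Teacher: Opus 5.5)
Your proposal is correct and follows the paper's proof essentially step by step. It uses the same partition of pairs (edges at exactly $M=-1$, remaining nonpositive pairs, $0<M<1/2$, and dyadic classes capped at $j\le J-1$ via $\nu(u,v)\le\min\{\nu(u),\nu(v)\}$), the same geometric weights $w_j=2^{-(j+2)}$ with a contradiction by summation, and the same averaging over $v$. One small wording point: the nonnegativity of $\sum_{u,v}M(u,v)^\ell$ in Lemma~\ref{lem:odd_moments} holds for every positive $\ell$, and oddness is needed only so that edges contribute exactly $-d(U)$ and negative entries contribute nonpositively.
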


\begin{proof}
Let $M$ be the normalized covariance matrix from Lemma~\ref{lem:odd_moments}. By Lemma~\ref{lem:basic_inequalities}-(b), if $uv\in E$, then $M(u,v)=-1$. Moreover, for all $u,v\in U$,
\[
M(u,v)\le\frac{\min\{\nu(u),\nu(v)\}}{\nu(u)\nu(v)}-1=\frac{1}{\max\{\nu(u),\nu(v)\}}-1<\frac{1}{\tau}\le 2^J.
\]
The first step follows from the definition of $M$ and Lemma~\ref{lem:basic_inequalities}-(b). The second step rewrites the quotient, the third step uses $\nu(u),\nu(v)\ge\tau$, and the fourth step follows from the definition of $J$.

For each integer $j$ with $-1\le j\le J-1$, define
\[
\rho_j:=\frac{|\{(u,v)\in U\times U:2^j\le M(u,v)<2^{j+1}\}|}{|U|^2}.
\]
Every pair with $M(u,v)\ge 1/2$ belongs to exactly one of these classes. Partition the pairs in $U\times U$ into four groups. Edges contribute exactly $-d(U)|U|^2$ to the sum of $M(u,v)^\ell$, since $\ell$ is odd. The remaining pairs with $M(u,v)<0$ make negative contributions. Pairs with $0\le M(u,v)<1/2$ contribute at most $2^{-\ell}|U|^2$ in total. Finally, the pairs in class $j$ contribute at most $\rho_j2^{(j+1)\ell}|U|^2$. Consequently,
\[
0\le\sum_{u\in U}\sum_{v\in U}M(u,v)^\ell\le\bigl(-d(U)+2^{-\ell}+\sum_{j=-1}^{J-1}\rho_j2^{(j+1)\ell}\bigr)|U|^2.
\]
The first step follows from Lemma~\ref{lem:odd_moments}; the second step follows from the preceding partition of the pairs. Rearranging gives
\[
\sum_{j=-1}^{J-1}\rho_j2^{(j+1)\ell}\ge d(U)-2^{-\ell}\ge\frac{3\delta}{4}.
\]
The first step rearranges the preceding display, and the second step uses $d(U)\ge\delta$ and $2^{-\ell}\le\delta/4$. Assign the weight $w_j:=2^{-(j+2)}$ to class $j$. These weights satisfy
\[
\sum_{j=-1}^{J-1}w_j=1-2^{-(J+1)}<1.
\]
The first step sums a finite geometric series, and the second step uses $J\ge0$. Therefore some $j$ satisfies
\[
\rho_j2^{(j+1)\ell}\ge\frac{3\delta}{4}w_j.
\]
Otherwise, summing the strict reverse inequalities would contradict the preceding lower bound $3\delta/4$. Rearranging gives
\[
\rho_j\ge\frac{3\delta}{4}2^{-(j+2)}2^{-(j+1)\ell}=\frac{3\delta}{8}2^{-(j+1)(\ell+1)}.
\]
The first step divides the preceding lower bound by $2^{(j+1)\ell}$ and substitutes $w_j=2^{-(j+2)}$. The second step collects the powers of $2$. The additional loss is determined by the selected scale $j$ and is independent of the number of classes.

The quantity $\rho_j|U|$ is the average over $v\in U$ of the number of $u\in U$ with $2^j\le M(u,v)<2^{j+1}$. Thus some $v\in U$ has at least $\rho_j|U|$ vertices $u\in U$ with $M(u,v)\ge 2^j$. For every such vertex,
\[
\nu(u,v)=(1+M(u,v))\nu(u)\nu(v)\ge(1+2^j)\nu(u)\nu(v).
\]
The first step follows from the definition of $M$, and the second step uses $M(u,v)\ge2^j$. Together with the lower bound on $\rho_j$, this proves Eq.~\eqref{eq:dyadic_increment}.
\end{proof}

\subsection{Cost per bit}
\label{sec:cost_per_bit}

The following inequality bounds the logarithmic loss in set size in terms of the logarithmic increase in the marginal threshold. The bound is uniform over all dyadic correlation scales.

\begin{lemma}[Cost per bit]
\label{lem:cost_per_bit}
Let $\ell\ge1$ be an integer and $\Lambda\ge0$ a real number. For every integer $j\ge-1$,
\[
(j+1)\ell+\Lambda\le(2\ell+2\Lambda)\log(1+2^j).
\]
\end{lemma}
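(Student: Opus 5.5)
The plan is to compare the two sides term by term. Since $\ell\ge1$, $\Lambda\ge0$, and $\log(1+2^j)>0$, it suffices to prove the two separate inequalities
\[
(j+1)\ell\le 2\ell\log(1+2^j),\qquad \Lambda\le 2\Lambda\log(1+2^j).
\]
Adding them gives the claim. Dividing by $\ell>0$, respectively by $\Lambda$ (the case $\Lambda=0$ being trivial), reduces these to the two scalar inequalities $j+1\le 2\log(1+2^j)$ and $\log(1+2^j)\ge1/2$ for every integer $j\ge-1$.

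For the second inequality, I will note that $\log(1+2^j)$ is increasing in $j$. It therefore suffices to check the smallest case $j=-1$, where $\log(3/2)\approx0.585\ge1/2$; equivalently, $(3/2)^2=9/4\ge2$.

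For the first inequality, I will split into two cases. If $j\ge0$, then $\log(1+2^j)\ge\log 2^j=j$, so $2\log(1+2^j)\ge 2j\ge j+1$, with the last step using $j\ge1$; the case $j=0$ is checked directly, since $2\log2=2\ge1$. If $j=-1$, the left side is $0$ and the right side is positive. The only point requiring care is this small-$j$ regime. There the additive constant $\Lambda$ has to be absorbed by the factor $2\log(1+2^j)$, which is exactly why the numerical check $\log(3/2)\ge1/2$ is needed. It is also why the constant $2$ in front of $\ell+\Lambda$ is chosen.
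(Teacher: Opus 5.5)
Your proof is correct and is essentially the paper's argument: both rely on $\log(3/2)\ge1/2$ and on $\log(1+2^j)\ge j$ for $j\ge1$, with $j=0$ checked directly. The only difference is cosmetic. You split the inequality into an $\ell$-part and a $\Lambda$-part before doing the case analysis on $j$, whereas the paper does the case analysis on the whole inequality, absorbing $\Lambda$ via $\Lambda\le j\Lambda$ when $j\ge1$.
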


\begin{proof}
If $j=-1$, the left-hand side is $\Lambda$, while $\log(3/2)\ge1/2$ implies that the right-hand side is at least $\ell+\Lambda$. If $j=0$, the left-hand and right-hand sides are $\ell+\Lambda$ and $2\ell+2\Lambda$, respectively. If $j\ge1$, then $\log(1+2^j)\ge j$, and
\[
(j+1)\ell+\Lambda\le2j\ell+j\Lambda\le j(2\ell+2\Lambda)\le(2\ell+2\Lambda)\log(1+2^j).
\]
The first step uses $j+1\le2j$ and $\Lambda\le j\Lambda$. The second step uses $\Lambda\ge0$, and the third step uses $\log(1+2^j)\ge j$.
\end{proof}

\subsection{Iterated conditioning}
\label{sec:iterated_conditioning}

Repeated conditioning raises the marginal threshold until the retained set becomes sparse. The next lemma bounds the number of conditioned vertices and the cumulative loss in set size.

\begin{lemma}[Iterated conditioning]
\label{lem:iterated_conditioning}
Let $t,h\ge1$ be integers. Let $\ell$ be the smallest odd integer with $\ell\ge h+2$, and define
\[
\Lambda:=h+\log\frac{16}{3},\qquad K:=2(\ell+1)+2\Lambda.
\]
Let $r\ge2t+3$, and let $\mu$ be a level-$r$ pseudo-distribution satisfying $\sum_{u\in V}\mu(u)\ge2^{-t}N$. Assume $N\ge2^{(2K+1)(t+1)}$. Then there are a set $S\subseteq V$ with $|S|\le2t+1$ and $\mu(S)>0$, and a number $\tau\in[2^{-(t+1)},1]$, such that
\[
U:=\{u\in V\setminus S:\mu_S(u)\ge\tau\}
\]
satisfies $|U|\ge2^{-(K+1)(t+1)}N$ and $d(U)\le2^{-h}$.
\end{lemma}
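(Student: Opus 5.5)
We construct $S$, $\tau$ and $U$ by an iterative process and stop as soon as the current set is sparse. Start with $S_0:=\emptyset$, $\nu^{(0)}:=\mu$, $\tau_0:=2^{-(t+1)}$ and $U_0:=\{u:\mu(u)\ge\tau_0\}$. Since $\mu(u)\le1$ by Lemma~\ref{lem:basic_inequalities}-(a),
\[
2^{-t}N\le\sum_u\mu(u)\le|U_0|+\tau_0N,
\]
so $|U_0|\ge2^{-(t+1)}N$.

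At stage $i$ we maintain the following invariants: $\nu^{(i)}=\mu_{S_i}$ with $\mu(S_i)>0$ and $|S_i|=i$; every $u\in U_i$ lies outside $S_i$ and has $\mu_{S_i}(u)\ge\tau_i$; and
\[
|U_i|\ge 2^{-(t+1)-\sum_{k<i}c_k}N,\qquad c_k:=(j_k+1)(\ell+1)+\Lambda.
\]
We enlarge $U_i$ to the full threshold set $\{u\in V\setminus S_i:\mu_{S_i}(u)\ge\tau_i\}$, which only helps the size bound. If $d(U_i)\le2^{-h}$ we stop. Otherwise $d(U_i)\ge\delta:=2^{-h}$, and $\ell\ge h+2$ gives $2^{-\ell}\le\delta/4$. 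The level of $\nu^{(i)}$ is $r-i\ge2$ provided $i\le2t+1$, so Lemma~\ref{lem:dyadic_increment} applies to $\nu^{(i)}$, $U_i$, $\tau_i$, $\delta$. It yields $j_i\in[-1,J-1]$, a vertex $v\in U_i$, and a set $W$ of size at least $\frac{3\delta}{8}2^{-(j_i+1)(\ell+1)}|U_i|$ on which $\nu^{(i)}(u,v)\ge(1+2^{j_i})\nu^{(i)}(u)\nu^{(i)}(v)$.

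Since $v\in U_i$, we have $\nu^{(i)}(v)\ge\tau_i>0$. Set $S_{i+1}:=S_i\cup\{v\}$. By Lemma~\ref{lem:conditioning} and Eq.~\eqref{eq:successive_conditioning}, which needs $|S_i|\le r-2$, every $u\in W$ satisfies
\[
\mu_{S_{i+1}}(u)\ge(1+2^{j_i})\mu_{S_i}(u)\ge(1+2^{j_i})\tau_i=:\tau_{i+1},
\]
and $\mu(S_{i+1})>0$. Take $U_{i+1}:=W\setminus\{v\}$. Removing $v$ costs at most a factor $2$ when $|W|\ge2$; the case $|W|=1$ is handled below. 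Note $\frac{3\delta}{8}\cdot\frac12=2^{-h-\log(16/3)}=2^{-\Lambda}$, so
\[
|U_{i+1}|\ge2^{-c_i}|U_i|,
\]
which preserves the invariant.

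Termination follows from the threshold. Each gain satisfies $1+2^{j_i}\ge3/2$ and $\tau_i\le1$, since $\tau_i$ is a lower bound on marginals of a nonempty set, which are at most $1$. Hence $(3/2)^m2^{-(t+1)}\le1$, giving $m\le(t+1)/\log(3/2)<2t+2$, so $m\le2t+1$. This gives $|S|\le2t+1$, and $r\ge2t+3$ keeps the level at least $2$ throughout. For the size bound, Lemma~\ref{lem:cost_per_bit} applied with $\ell+1$ in place of $\ell$ gives $c_i\le K\log(1+2^{j_i})$. Summing,
\[
\sum_i c_i\le K\log(\tau_m/\tau_0)\le K(t+1),
\]
so $|U_m|\ge2^{-(K+1)(t+1)}N$. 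This is the required bound with $\tau:=\tau_m\in[2^{-(t+1)},1]$.

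The main obstacle is the deletion of $v$ and nonemptiness, and it is here that $N\ge2^{(2K+1)(t+1)}$ enters. Each $U_i$ obeys the running bound $|U_i|\ge2^{-(K+1)(t+1)}N\ge2^{K(t+1)}\ge2$, so $U_i$ is nonempty and $\tau_i\le1$ is justified. To control the loss when passing from $W$ to $W\setminus\{v\}$, we need $|W|\ge2$. We have
\[
|W|\ge2^{-c_i+1}|U_i|\ge2\cdot2^{-(K+1)(t+1)-c_i}N,
\]
with $c_i\le K\log(1+2^{j_i})\le K(t+1)$. This is at least $2\cdot2^{-(2K+1)(t+1)}N\ge2$, so $|W\setminus\{v\}|\ge|W|/2$. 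Carrying out this bookkeeping precisely, together with the requirement that Eq.~\eqref{eq:successive_conditioning} needs $|S_i|\le r-2$, is the delicate part. The remaining steps are direct applications of the lemmas.
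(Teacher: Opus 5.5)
Your proposal is correct and follows essentially the same route as the paper: the same invariants on $|S_i|$, $\tau_i$ and $|U_i|$, the same application of Lemma~\ref{lem:dyadic_increment} with the factor $2$ lost in deleting $v$ absorbed into $\Lambda$, and the same summation via Lemma~\ref{lem:cost_per_bit}. Your cruder bound $c_i\le K(t+1)$, which holds because $j_i\le t$, suffices in place of the paper's $c_i\le (t+1)K/2$. The one thing to tidy is the order of justification: nonemptiness of $U_{i+1}$, and hence $\tau_{i+1}\le1$ (using level $r-i-1\ge1$), should be derived from $|W|\ge2$ at the previous step, as the paper does, rather than from the running size bound, since that bound itself presupposes $\tau_{i+1}\le1$.
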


\begin{proof}
Let $\delta:=2^{-h}$ and $\tau_0:=2^{-(t+1)}$. The choice of $\ell$ gives $2^{-\ell}\le\delta/4$. We construct sets $\emptyset=S_0\subseteq S_1\subseteq\cdots$, thresholds $\tau_0,\tau_1,\ldots$, and integers $j_0,j_1,\ldots\ge-1$. Write
\[
U_i:=\{u\in V\setminus S_i:\mu_{S_i}(u)\ge\tau_i\},\qquad c_k:=(j_k+1)(\ell+1)+\Lambda.
\]
At each index $i$ reached by the construction, we maintain the following invariants, with empty sums equal to $0$ and empty products equal to $1$:
\begin{enumerate}[label=(\alph*)]
\item $|S_i|=i$ and $\mu(S_i)>0$.
\item $\tau_i=\tau_0\prod_{k=0}^{i-1}(1+2^{j_k})$ and $\tau_i\le1$.
\item $|U_i|\ge2^{-(t+1)-\sum_{k=0}^{i-1}c_k}N$.
\end{enumerate}
These invariants imply two further facts. First,
\begin{equation}
\label{eq:conditioning_total_gain}
\sum_{k=0}^{i-1}\log(1+2^{j_k})=\log(\tau_i/\tau_0)\le t+1.
\end{equation}
The first step follows from invariant (b), and the second step uses $\tau_i\le1$ and $\tau_0=2^{-(t+1)}$. Each summand is at least $\log(3/2)>1/2$, so $i<2(t+1)$, or equivalently $i\le2t+1$. Second,
\begin{equation}
\label{eq:conditioning_total_cost}
\sum_{k=0}^{i-1}c_k\le K\sum_{k=0}^{i-1}\log(1+2^{j_k})\le K(t+1).
\end{equation}
The first step applies Lemma~\ref{lem:cost_per_bit} with $\ell+1$ in place of $\ell$, and the second step follows from Eq.~\eqref{eq:conditioning_total_gain}. Consequently,
\begin{equation}
\label{eq:conditioning_size_invariant}
|U_i|\ge2^{-(K+1)(t+1)}N.
\end{equation}
This follows by substituting Eq.~\eqref{eq:conditioning_total_cost} into invariant (c). In particular, $U_i$ is nonempty, since the hypothesis implies $N\ge2^{(K+1)(t+1)}$.

\emph{Base case.} Let $S_0:=\emptyset$. Then $\mu(S_0)=1$, $\mu_{S_0}=\mu$, and $\tau_0\le1$. Moreover,
\[
2^{-t}N\le\sum_{u\in V}\mu(u)\le|U_0|+\tau_0N.
\]
The first step follows from the assumed objective bound. The second step uses Lemma~\ref{lem:basic_inequalities}-(a) on $U_0$ and the bound $\mu(u)<\tau_0$ outside $U_0$. Hence $|U_0|\ge2^{-(t+1)}N$, establishing invariant (c) for $i=0$.

\emph{Inductive Step.} Suppose that the invariants hold at index $i$ and that $d(U_i)>\delta$. The bound $i\le2t+1$ gives $i\le r-2$, so Lemma~\ref{lem:conditioning} shows that $\nu:=\mu_{S_i}$ is a level-$(r-i)$ pseudo-distribution with $r-i\ge2$. The set $U_i$ is nonempty by Eq.~\eqref{eq:conditioning_size_invariant}, and every $u\in U_i$ satisfies $\nu(u)\ge\tau_i$. Invariant (b) gives $\tau_i\in[\tau_0,1]$, and therefore $J:=\lceil\log(1/\tau_i)\rceil\le t+1$.

Apply Lemma~\ref{lem:dyadic_increment} to $\nu,U_i,\tau_i,\delta,\ell$. It gives an integer $j_i$ with $-1\le j_i\le J-1\le t$ and a vertex $v_i\in U_i$ such that
\[
W:=\{u\in U_i:\nu(u,v_i)\ge(1+2^{j_i})\nu(u)\nu(v_i)\}
\]
satisfies
\[
|W|\ge\frac{3\delta}{8}2^{-(j_i+1)(\ell+1)}|U_i|=2^{-h-\log(8/3)-(j_i+1)(\ell+1)}|U_i|=2^{1-c_i}|U_i|.
\]
The first step follows from Eq.~\eqref{eq:dyadic_increment}. The second step uses $\delta=2^{-h}$, and the third step uses the definitions of $\Lambda$ and $c_i$.

Set $S_{i+1}:=S_i\cup\{v_i\}$ and $\tau_{i+1}:=(1+2^{j_i})\tau_i$. Since $v_i\in U_i\subseteq V\setminus S_i$, we have $|S_{i+1}|=i+1$. Since $\nu(v_i)\ge\tau_i>0$, Lemma~\ref{lem:conditioning} gives $\mu(S_{i+1})>0$. Thus invariant (a) holds at the next index.

For $u\in W\setminus\{v_i\}$, we have $u\notin S_{i+1}$ and
\[
\mu_{S_{i+1}}(u)=\frac{\nu(u,v_i)}{\nu(v_i)}\ge(1+2^{j_i})\nu(u)\ge(1+2^{j_i})\tau_i=\tau_{i+1}.
\]
The first step follows from the conditioning identity in Lemma~\ref{lem:conditioning}. The second step uses $u\in W$, the third step uses $u\in U_i$, and the fourth step follows from the definition of $\tau_{i+1}$. Thus $W\setminus\{v_i\}\subseteq U_{i+1}$.

Since $j_i+1\le t+1$ and $\ell+1+\Lambda=K/2$,
\[
c_i\le(t+1)(\ell+1)+\Lambda\le(t+1)K/2.
\]
The first step uses $c_i=(j_i+1)(\ell+1)+\Lambda$ and $j_i+1\le t+1$. The second step uses $\Lambda\ge0$, $t+1\ge1$, and $K=2(\ell+1+\Lambda)$. Consequently,
\[
|W|\ge2^{1-c_i}|U_i|\ge2^{1-(t+1)K/2-(K+1)(t+1)}N\ge2.
\]
The first step was proved above. The second step uses the bound on $c_i$ and Eq.~\eqref{eq:conditioning_size_invariant}. The third step uses $N\ge2^{(2K+1)(t+1)}$; after substitution, the exponent is $1+K(t+1)/2\ge1$. It follows that
\[
|U_{i+1}|\ge|W|-1\ge|W|/2\ge2^{-c_i}|U_i|.
\]
The first step uses $W\setminus\{v_i\}\subseteq U_{i+1}$, the second step uses $|W|\ge2$, and the third step uses $|W|\ge2^{1-c_i}|U_i|$. Together with invariant (c) at index $i$, this establishes invariant (c) at index $i+1$.

The product formula in invariant (b) follows from the definition of $\tau_{i+1}$. Also, $W\setminus\{v_i\}$ is nonempty. For any $u$ in this set, the preceding conditioning bound gives $\tau_{i+1}\le\mu_{S_{i+1}}(u)$, while Lemma~\ref{lem:basic_inequalities}-(a) gives $\mu_{S_{i+1}}(u)\le1$. The latter bound applies because $\mu_{S_{i+1}}$ is a level-$(r-i-1)$ pseudo-distribution and $r-i-1\ge1$. This establishes invariant (b) at the next index.

\emph{Termination.} Each step adds at least $\log(3/2)$ to the sum in Eq.~\eqref{eq:conditioning_total_gain}, while that sum is at most $t+1$. The construction must therefore reach an index $i_*$ with $d(U_{i_*})\le\delta$. Take $S:=S_{i_*}$ and $\tau:=\tau_{i_*}$. Invariant (a) and the bound $i_*\le2t+1$ establish the required properties of $S$. Invariant (b) gives $\tau\in[2^{-(t+1)},1]$, and Eq.~\eqref{eq:conditioning_size_invariant} gives the required size bound.
\end{proof}

\subsection{Computing a pseudo-distribution}
\label{sec:compute_pseudo_distribution}

The next lemma computes a feasible pseudo-distribution whose total marginal mass is at least half the promised lower bound on the independent-set size. Its rational entries allow the subsequent enumeration and threshold comparisons to use exact arithmetic.

\begin{lemma}[Computing a pseudo-distribution]
\label{lem:compute_pseudo_distribution}
Let $r,t\ge1$ be integers with $N\ge2^t$. There is a deterministic algorithm running in time $N^{O(r)}$ that outputs a level-$r$ pseudo-distribution $\mu$ with rational values. If $\alpha(G)\ge2^{-t}N$, then the output satisfies $\sum_{u\in V}\mu(u)\ge2^{-(t+1)}N$.
\end{lemma}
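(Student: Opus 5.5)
The plan is to solve the level-$r$ Lasserre SDP approximately and then repair its small infeasibility by mixing with an explicit, strictly feasible pseudo-distribution.

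\emph{Reduced SDP and its size.} Definition~\ref{def:pseudo_distribution}-(b) forces $\nu(T)=0$ whenever $T$ contains an edge. I would therefore keep only the variables $\nu(T)$ with $T$ independent and $|T|\le2r$. The rows and columns of $M_r[\nu]$ indexed by non-independent sets $A$ are then identically zero, since $A\cup B\supseteq A$ contains an edge. Hence $M_r[\nu]\succeq0$ is equivalent to positive semidefiniteness of the principal submatrix $\widetilde M_r[\nu]$ indexed by the independent sets in $\mathcal{V}_r$. This is an SDP with $N^{O(r)}$ variables, matrix dimension $N^{O(r)}$, and objective $\sum_u\nu(u)$.

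\emph{A strictly feasible reference point.} Let $\nu_0$ be the moment vector of the uniform distribution over all independent sets $I$ with $|I|\le r$, that is, $\nu_0(T)=\Pr_I[T\subseteq I]$. Then
\[
\widetilde M_r[\nu_0]=\E_I[x_Ix_I^\top],\qquad (x_I)_A=1[A\subseteq I].
\]
The vectors $x_I$, for $I$ ranging over the independent sets in $\mathcal{V}_r$, are unitriangular with respect to inclusion, so they form a basis. Hence $\widetilde M_r[\nu_0]$ is positive definite. Its entries are rationals with $N^{O(r)}$-bit denominators. By a determinant/Hadamard bound, its least eigenvalue is at least $\lambda_0\ge2^{-N^{O(r)}}$. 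Any integral point is feasible; in particular, the uniform distribution on a maximum independent set has objective $\alpha(G)\ge2^{-t}N$ under the promise. The feasible region is bounded, since all entries lie in $[0,1]$ by Lemma~\ref{lem:basic_inequalities}, and it contains a ball of radius $2^{-N^{O(r)}}$ around $\nu_0$ in the reduced affine space.

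\emph{Approximate solve, then mix.} Run the (deterministic) ellipsoid method with a rational precision parameter $\varepsilon$. This takes time polynomial in the dimension and in $\log(1/\varepsilon)$, so $N^{O(r)}$ when $\log(1/\varepsilon)=N^{O(r)}$. It returns a rational $\nu'$ with:
\begin{itemize}
\item $\nu'(\emptyset)=1$;
\item the edge constraints imposed exactly, by construction of the reduced variables;
\item $\widetilde M_r[\nu']\succeq-\varepsilon I$;
\item objective at least $\mathrm{OPT}-\varepsilon\ge2^{-t}N-\varepsilon$.
\end{itemize}
Output $\mu:=(1-\lambda)\nu'+\lambda\nu_0$ with $\lambda=1/4$ and $\varepsilon\le\lambda_0/4$. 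Then
\[
\widetilde M_r[\mu]\succeq-(3/4)\varepsilon I+(1/4)\lambda_0 I\succeq0,
\]
and $\mu(\emptyset)=1$ still holds. The objective is at least $(3/4)(2^{-t}N-\varepsilon)\ge2^{-(t+1)}N$, using $N\ge2^t$ so that $2^{-t}N\ge1\gg\varepsilon$. All values are rational, and feasibility can even be double-checked exactly by an $LDL^\top$ factorization.

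\emph{Main obstacle.} The delicate step is the bit-complexity bookkeeping for the ellipsoid method: certifying an inner ball and an outer bound, and lower-bounding $\lambda_0$ explicitly. I would get $\lambda_0$ from $\det\widetilde M_r[\nu_0]\ge q^{-D}$ and $\|\widetilde M_r[\nu_0]\|\le D$, where $q$ is the common denominator and $D$ the dimension. If quoting the ellipsoid guarantee for SDPs with this kind of interior is awkward, a fallback is a separation-oracle formulation. Here the oracle computes an approximate least eigenvector of the current matrix in rational arithmetic and emits it as a cutting plane.
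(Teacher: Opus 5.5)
Your proposal is correct and follows essentially the same route as the paper. Both arguments restrict to moments indexed by independent sets, use the uniform distribution over independent sets of size at most $r$ as a strictly feasible center (it is positive definite because the inclusion matrix is unitriangular), solve approximately by the ellipsoid method, and repair the output by a convex combination with that center. Your cruder determinant bound $\lambda_0\ge2^{-N^{O(r)}}$ and your constant mixing weight $1/4$ (which uses up the factor-$2$ slack in the target) both suffice. The paper instead gets $\lambda_{\min}\ge n_r^{-3}$ from an explicit inverse of the inclusion matrix, and mixes with weight $1/(4N+1)$ so that it loses only an additive $1/2$.

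One small correction: Lemma~\ref{lem:basic_inequalities} only covers singletons and pairs, and entries $\nu(T)$ with $|T|>r$ need not be nonnegative. What boundedness needs is $|\nu(T)|\le1$, which follows from the $2\times2$ principal minor on indices $A,B$ with $T=A\cup B$ and $|A|,|B|\le r$.
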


\begin{proof}
Let $\mathcal I_k$ be the family of independent sets of size at most $k$, and write $n_k:=|\mathcal I_k|\le(N+1)^k$. Consider vectors $y=(y_T)_{T\in\mathcal I_{2r}}$ with $y_\emptyset=1$. Define $\mu_y(T):=y_T$ for $T\in\mathcal I_{2r}$, and $\mu_y(T):=0$ for $T\in\mathcal V_{2r}$ containing an edge. Let
\[
M(y):=(m_{A,B})_{A,B\in\mathcal I_r},\qquad
m_{A,B}:=\begin{cases}
y_{A\cup B},&\text{if $A\cup B$ is independent},\\
0,&\text{otherwise}.
\end{cases}
\]
This is well defined because $|A\cup B|\le2r$.

\emph{Claim 1: $\mu_y$ is a level-$r$ pseudo-distribution if and only if $M(y)\succeq0$.}
The conditions in Definition~\ref{def:pseudo_distribution}-(a) and Definition~\ref{def:pseudo_distribution}-(b) hold by construction. If $A\in\mathcal V_r$ contains an edge, then $A\cup B$ contains an edge for every $B$, so the row of $M_r[\mu_y]$ indexed by $A$ is zero. Hence $M_r[\mu_y]$ is positive semidefinite if and only if its principal submatrix on $\mathcal I_r$, which is $M(y)$, is positive semidefinite.

Let
\[
F:=\{y:y_\emptyset=1,\ M(y)\succeq0\},\qquad
f(y):=\sum_{u\in V}y_{\{u\}},\qquad
\mathrm{OPT}:=\max_{y\in F}f(y).
\]
The set $F$ is convex and closed.

\emph{Claim 2: every $y\in F$ satisfies $|y_T|\le1$ for all $T$; in particular, $\mathrm{OPT}\le N$.}
For $A\in\mathcal I_r$, the principal submatrix on the indices $\emptyset,A$ gives $0\le y_A\le1$, as in Lemma~\ref{lem:basic_inequalities}-(a). For $T\in\mathcal I_{2r}$, partition $T=A\cup B$ into disjoint independent sets with $|A|,|B|\le r$. The corresponding $2\times2$ submatrix has entries $y_A,y_T,y_B$, so positive semidefiniteness gives $y_T^2\le y_Ay_B$. The bounds on $y_A$ and $y_B$ give $y_Ay_B\le1$. Thus $F$ is bounded, and its closedness ensures that the maximum defining $\mathrm{OPT}$ is attained.

\emph{Claim 3: $\mathrm{OPT}\ge\alpha(G)$.}
Let $I_*$ be a maximum independent set, and set $y_T^*:=\mathbf1[T\subseteq I_*]$. For $A\in\mathcal I_r$, define $\chi_A:=\mathbf1[A\subseteq I_*]$. Then $M(y^*)=\chi\chi^\top\succeq0$. The first step follows from $\mathbf1[A\cup B\subseteq I_*]=\chi_A\chi_B$, with both sides zero whenever $A\cup B$ is not independent. The second step follows because an outer product is positive semidefinite. Moreover, $f(y^*)=|I_*|$, which proves the claim.

\emph{Claim 4: the point $y_T^0:=|\{I\in\mathcal I_r:T\subseteq I\}|/n_r$ satisfies $y^0\in F$ and $M(y^0)\succeq\widehat\lambda\mathrm{Id}$, where $\widehat\lambda:=n_r^{-3}$.}
Let $X\in\{0,1\}^{\mathcal I_r\times\mathcal I_r}$ have entries $X(A,I):=\mathbf1[A\subseteq I]$. Then
\[
(XX^\top)(A,B)=|\{I\in\mathcal I_r:A\cup B\subseteq I\}|.
\]
The equality follows by expanding the matrix product: each summand is $1$ exactly when $A\cup B\subseteq I$. This entry is zero if $A\cup B$ is not independent, so $M(y^0)=XX^\top/n_r$. The inverse of $X$ is given by
\[
X^{-1}(I,B)=(-1)^{|B|-|I|}\mathbf1[I\subseteq B].
\]
Indeed,
\[
\sum_{I\in\mathcal I_r}X(A,I)X^{-1}(I,B)=\sum_{I:A\subseteq I\subseteq B}(-1)^{|B|-|I|}=\mathbf1[A=B].
\]
The first step follows from the definitions; every subset of $B\in\mathcal I_r$ belongs to $\mathcal I_r$. The second step follows from the binomial identity, with an empty sum when $A\not\subseteq B$. Since every entry of $X^{-1}$ lies in $\{0,1,-1\}$,
\[
\|X^{-1}\|_{\mathrm{op}}\le\|X^{-1}\|_{\mathrm F}\le n_r.
\]
The first step compares the operator and Frobenius norms, and the second step bounds the sum of the squares of the $n_r^2$ entries. Consequently,
\[
\lambda_{\min}(M(y^0))=n_r^{-1}\|X^{-1}\|_{\mathrm{op}}^{-2}\ge n_r^{-3}=\widehat\lambda.
\]
The first step uses $M(y^0)=XX^\top/n_r$ and the singular values of $X$. The second step uses the preceding norm bound, and the third step uses the definition of $\widehat\lambda$. This proves Claim 4.

\emph{Claim 5: $M(y)-M(y')\succeq-n_r\|y-y'\|_2\mathrm{Id}$ for all $y,y'$.}
The map $z\mapsto M(y+z)-M(y)$ is linear, and each entry of the resulting matrix is either zero or a coordinate of $z$. Therefore
\[
\|M(y+z)-M(y)\|_{\mathrm{op}}\le\|M(y+z)-M(y)\|_{\mathrm F}\le n_r\|z\|_\infty\le n_r\|z\|_2.
\]
The first step compares the operator and Frobenius norms, the second step bounds the $n_r^2$ entries, and the third step compares vector norms. The lower bound on the eigenvalues of a symmetric matrix proves Claim 5.

Use the coordinates $(y_T)_{T\ne\emptyset}$, so that the affine constraint $y_\emptyset=1$ is imposed implicitly. By Claims 2, 4, and 5, $F$ is a well-bounded, full-dimensional convex body in dimension $D:=n_{2r}-1$. It contains the ball of radius $\rho_0:=\widehat\lambda/n_r$ centered at $y^0$ and is contained in the ball of radius $2\sqrt D$ centered at $y^0$.

A separation oracle for $F$ can be implemented in polynomial time using exact rational arithmetic. Symmetric Gaussian elimination on $M(y)$ either certifies positive semidefiniteness or produces a rational vector $x$ with $x^\top M(y)x<0$. In the latter case, write the affine function $x^\top M(z)x$ in the reduced coordinates as $a^\top z+b$. Every $z\in F$ satisfies $a^\top z+b\ge0$, while $a^\top y+b<0$, so $a^\top z\ge-b$ is a separating inequality.

For $\epsilon>0$, let $S(F,\epsilon)$ be the set of points at Euclidean distance at most $\epsilon$ from $F$, and let $S(F,-\epsilon)$ be the set of points whose closed $\epsilon$-ball lies in $F$. By the ellipsoid method for weak optimization over a well-bounded convex body with a separation oracle~\cite[Chapter~4]{gls88}, for every rational $\epsilon\in(0,\rho_0)$ one can compute a rational $y^1\in S(F,\epsilon)$ such that
\[
f(y^1)\ge f(x)-\epsilon\qquad\text{for every $x\in S(F,-\epsilon)$}.
\]
The running time is polynomial in $D$, $\log(1/\rho_0)$, $\log(1/\epsilon)$, and the encoding lengths of the rational data.

We next express this guarantee in terms of $\mathrm{OPT}$. Fix a rational $\eta\in(0,1)$, choose $x^*\in F$ with $f(x^*)=\mathrm{OPT}$, and set
\[
\epsilon:=\frac{\eta\rho_0}{2N},\qquad
\lambda:=\frac{\epsilon}{\rho_0}=\frac{\eta}{2N},\qquad
x':=(1-\lambda)x^*+\lambda y^0.
\]
The identity $\lambda=\eta/(2N)$ follows by substituting the definition of $\epsilon$. For every $z$ with $\|z\|_2\le\epsilon$, the point $y^0+z/\lambda$ belongs to the ball of radius $\rho_0$ around $y^0$, hence to $F$. Convexity then gives $x'+z=(1-\lambda)x^*+\lambda(y^0+z/\lambda)\in F$. Thus $x'\in S(F,-\epsilon)$. Moreover,
\[
f(x')\ge(1-\lambda)\mathrm{OPT}\ge\mathrm{OPT}-\eta/2.
\]
The first step uses $f(y^0)\ge0$, and the second step uses $\mathrm{OPT}\le N$ and $\lambda=\eta/(2N)$. The weak optimization guarantee now gives
\[
\operatorname{dist}(y^1,F)\le\epsilon\le\eta,\qquad
f(y^1)\ge f(x')-\epsilon\ge\mathrm{OPT}-\eta.
\]
In the distance bound, the first step follows from $y^1\in S(F,\epsilon)$ and the second step uses $\epsilon\le\eta/2$. In the objective bound, the first step follows from weak optimization and the second step uses the preceding bound on $f(x')$ and $\epsilon\le\eta/2$. The inequality $\epsilon\le\eta/2$ follows from $\rho_0\le1\le N$.

Take $\eta:=\widehat\lambda/(4Nn_r)$ and $\eta':=\eta n_r=\widehat\lambda/(4N)$. Applying Claim 5 to $y^1$ and a nearest point in the compact set $F$ gives $M(y^1)\succeq-\eta'\mathrm{Id}$. Define
\[
\theta:=\frac{\eta'}{\eta'+\widehat\lambda}=\frac{1}{4N+1},\qquad
y:=(1-\theta)y^1+\theta y^0,
\]
and output $\mu:=\mu_y$. The identity $\theta=1/(4N+1)$ follows by substituting $\eta'=\widehat\lambda/(4N)$ into the definition of $\theta$. The vector $y$ is rational and has $y_\emptyset=1$. Also,
\[
M(y)=(1-\theta)M(y^1)+\theta M(y^0)\succeq\bigl(\theta\widehat\lambda-(1-\theta)\eta'\bigr)\mathrm{Id}=0.
\]
The first step uses that $M$ is affine in the reduced coordinates and the coefficients sum to one. The second step combines Claim 4 with $M(y^1)\succeq-\eta'\mathrm{Id}$, and the third step follows from the definition of $\theta$. Thus Claim 1 shows that $\mu$ is a level-$r$ pseudo-distribution. Finally,
\[
f(y)\ge(1-\theta)(\mathrm{OPT}-\eta)\ge\mathrm{OPT}-\eta-\theta\mathrm{OPT}\ge\mathrm{OPT}-\frac12.
\]
The first step uses $f(y^0)\ge0$ and $f(y^1)\ge\mathrm{OPT}-\eta$. The second step discards the nonnegative term $\theta\eta$, and the third step uses $\eta\le1/4$, $\theta\le1/(4N)$, and $\mathrm{OPT}\le N$. If $\alpha(G)\ge2^{-t}N$, then
\[
f(y)\ge2^{-t}N-\frac12\ge2^{-(t+1)}N.
\]
The first step combines Claim 3 with the preceding objective guarantee, and the second step uses $N\ge2^t$.

For the running time, $D\le(N+1)^{2r}$, and $\log(1/\rho_0)$, $\log(1/\eta)$, and $\log(1/\epsilon)$ are all $O(r\log N)$. Computing $y^0$ requires only enumeration of $\mathcal I_r$ and its subsets. All rational operations above have bit complexity polynomial in these parameters. The total running time is therefore $N^{O(r)}$.
\end{proof}

\subsection{Finding a large sparse set}
\label{sec:large_sparse_set}

The following lemma converts the conditioning result into a deterministic algorithm by enumerating small conditioning sets and marginal thresholds. The algorithm produces a sparse induced subgraph of the required size.

\begin{lemma}[Finding a large sparse set]
\label{lem:large_sparse_set}
Let $t,h\ge1$ be integers. Let $\ell$ be the smallest odd integer with $\ell\ge h+2$, and define
\[
\Lambda':=h+\log\frac{16}{3},\qquad K':=2(\ell+1)+2\Lambda'.
\]
If $\alpha(G)\ge2^{-t}N$, then a deterministic algorithm running in time $N^{O(t)}+O(\log(h+1))$ finds a nonempty $U\subseteq V$ such that
\begin{equation}
\label{eq:large_sparse_set}
|U|\ge2^{-(2K'+1)(t+2)}N,\qquad d(U)\le2^{-h}.
\end{equation}
In particular, $|U|\ge2^{-O(th)}N$. The additive term accounts for reading the binary parameter $h$; when $h=O(t)$ it is absorbed in $N^{O(t)}$.
\end{lemma}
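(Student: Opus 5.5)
The plan is to combine Lemma~\ref{lem:compute_pseudo_distribution} with Lemma~\ref{lem:iterated_conditioning} applied at parameter $t+1$, and to make the existential conclusion constructive by enumeration. The argument splits into two cases according to the size of $N$.

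\emph{Small $N$.} Suppose $N<2^{(2K'+1)(t+2)}$. The algorithm outputs a singleton $U=\{u\}$. Then $d(U)=0\le2^{-h}$, and
\[
|U|=1> 2^{-(2K'+1)(t+2)}N,
\]
so Eq.~\eqref{eq:large_sparse_set} holds with no further work.

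\emph{Large $N$.} Suppose $N\ge2^{(2K'+1)(t+2)}$. Then in particular $N\ge2^t$.

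\begin{enumerate}[label=(\roman*)]
\item Set $r:=2t+5$ and run Lemma~\ref{lem:compute_pseudo_distribution} with this $r$. This takes time $N^{O(t)}$ and yields a rational level-$r$ pseudo-distribution $\mu$ with $\sum_u\mu(u)\ge2^{-(t+1)}N$.
\item Apply Lemma~\ref{lem:iterated_conditioning} with $t+1$ in place of $t$. Its constants $\Lambda,K$ coincide with $\Lambda',K'$, since they depend only on $h$ and $\ell$. The level requirement $r\ge2(t+1)+3$ holds, and the size requirement $N\ge2^{(2K'+1)(t+2)}$ is exactly the case assumption.
\item The lemma guarantees a set $S$ with $|S|\le2t+3$ and $\mu(S)>0$, and a threshold $\tau$, such that
\[
U^*:=\{u\in V\setminus S:\mu_S(u)\ge\tau\}
\]
satisfies $|U^*|\ge2^{-(K'+1)(t+2)}N$ and $d(U^*)\le2^{-h}$. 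This size bound is stronger than the one required.
\end{enumerate}

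\emph{Making it constructive.} The lemma only asserts that $S$ and $\tau$ exist, and $\tau$ is not known to the algorithm. As in Step~4 of the overview, the algorithm enumerates instead:
\begin{enumerate}[label=(\roman*)]
\item all $S$ with $|S|\le2t+3$ and $\mu(S)>0$;
\item all $w\in V\setminus S$;
\item for each pair, the candidate set $U_{S,w}:=\{u\in V\setminus S:\mu_S(u)\ge\mu_S(w)\}$.
\end{enumerate}
For each candidate it computes $|U_{S,w}|$ and $d(U_{S,w})$ exactly, using rational arithmetic. It outputs any candidate that satisfies Eq.~\eqref{eq:large_sparse_set}.

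\emph{Correctness.} For the true $S$ from the lemma, $U^*$ is nonempty. Let $w$ be a vertex of $U^*$ minimizing $\mu_S(w)$. Then $\mu_S(w)\ge\tau$, and every $u\in U^*$ has $\mu_S(u)\ge\mu_S(w)$. Conversely, every $u\in V\setminus S$ with $\mu_S(u)\ge\mu_S(w)$ satisfies $\mu_S(u)\ge\tau$. Hence $U_{S,w}=U^*$, and some candidate passes the test. Note that the values $\mu_S(u)$ are well defined: they use sets of size at most $2t+4\le2r$.

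\emph{Running time.} There are $N^{O(t)}$ choices of $(S,w)$. Each candidate is checked in polynomial time in $N$ and the bit length of $\mu$, which is $N^{O(t)}$. The algorithm also reads $h$ to compute $\ell$ and $K'$, costing $O(\log(h+1))$.

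\emph{Asymptotic form.} Since $\ell\le h+3$, we have $K'=O(h)$. Therefore $(2K'+1)(t+2)=O(th)$, which gives $|U|\ge2^{-O(th)}N$.

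\emph{Main obstacle.} The delicate point is the bookkeeping between the two lemmas. One must check that the shift $t\mapsto t+1$ matches every hypothesis of Lemma~\ref{lem:iterated_conditioning}: the level bound, the size bound on $N$, and the fact that $2^{-(t+1)}\le1$. One must also check that the threshold-enumeration trick recovers $U^*$ exactly. Everything else is routine.
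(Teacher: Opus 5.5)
Your proposal is correct and follows the paper's proof essentially step for step: the same case split on $N$, the choice $r=2t+5$, Lemma~\ref{lem:iterated_conditioning} applied with $t+1$ in place of $t$, and recovery of the unknown threshold by enumerating pairs $(S,w)$ with $w$ a minimizer of $\mu_S$ on $U_*$ (testing against the weaker size threshold in Eq.~\eqref{eq:large_sparse_set} is also fine). The paper additionally spells out how the tests are made exact and efficient, which you only assert: since $\log(16/3)$ is irrational it uses the rational identity $2^{K'}=2^{2\ell+2h+10}/9$, and when $h$ is huge it first checks whether $h\ge\lceil\log N\rceil$ (in which case a singleton already suffices) so that all bit lengths stay polynomial, and this check is what justifies the additive $O(\log(h+1))$ term.
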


\begin{proof}
If $N<2^{(2K'+1)(t+2)}$, output a single vertex. Its density is zero, and its size satisfies Eq.~\eqref{eq:large_sparse_set}.

Otherwise, $N\ge2^t$. Let $r:=2t+5$, and compute $\mu$ using Lemma~\ref{lem:compute_pseudo_distribution}. The hypothesis on $\alpha(G)$ gives $\sum_{u\in V}\mu(u)\ge2^{-(t+1)}N$. Apply Lemma~\ref{lem:iterated_conditioning} with $t+1$ in place of $t$. Its constants are $\Lambda'$ and $K'$, and its hypotheses hold because $r=2(t+1)+3$ and $N\ge2^{(2K'+1)(t+2)}$. It gives $S$ with $|S|\le2t+3$ and $\mu(S)>0$, and $\tau\in[2^{-(t+2)},1]$, for which
\[
U_*:=\{u\in V\setminus S:\mu_S(u)\ge\tau\}
\]
satisfies $|U_*|\ge2^{-(K'+1)(t+2)}N$ and $d(U_*)\le2^{-h}$.

To find $S$ and $\tau$, the algorithm enumerates every $S\subseteq V$ with $|S|\le2t+3$ and $\mu(S)>0$, and every $w\in V\setminus S$. For each pair, it computes
\[
U_{S,w}:=\{u\in V\setminus S:\mu_S(u)\ge\mu_S(w)\}.
\]
It outputs the first candidate with $|U_{S,w}|\ge2^{-(K'+1)(t+2)}N$ and $d(U_{S,w})\le2^{-h}$; if there is no such candidate, it reports failure. Under the lemma's hypothesis, a suitable candidate exists: take $S$ from Lemma~\ref{lem:iterated_conditioning} and choose $w$ minimizing $\mu_S$ over the nonempty set $U_*$. Every $u\in U_*$ has $\mu_S(u)\ge\mu_S(w)$. Conversely, if $u\in V\setminus S$ has $\mu_S(u)\ge\mu_S(w)$, then $\mu_S(u)\ge\tau$, because $w\in U_*$. Thus $U_{S,w}=U_*$. Its size bound implies the weaker bound in Eq.~\eqref{eq:large_sparse_set}.

Each value $\mu_S(u)=\mu(S\cup\{u\})/\mu(S)$ is available because $|S\cup\{u\}|\le2t+4\le2r$. The first step uses $|S|\le2t+3$, and the second step uses $r=2t+5$. There are at most $(N+1)^{2t+4}$ pairs, and each requires $\operatorname{poly}(N)$ arithmetic operations on rationals of bit length $N^{O(t)}$.

To implement the size tests exactly, use $2^{K'}=2^{2\ell+2h+10}/9$, which is rational, as are both powers of $2$ used in the size tests. No approximation of a logarithm is required. To handle arbitrarily large values of $h$ given in binary, first check whether $h\ge\lceil\log N\rceil$. In that case $(2K'+1)(t+2)>\log N$, so a singleton already satisfies Eq.~\eqref{eq:large_sparse_set}. Otherwise $h<\lceil\log N\rceil$, and all powers needed in the tests have bit length polynomial in $t$ and $\log N$. Together with Lemma~\ref{lem:compute_pseudo_distribution}, this proves the running-time bound. Finally, $\ell\le h+3$ and $\log(16/3)<5/2$ give $K'\le4h+13$, so the exponent in Eq.~\eqref{eq:large_sparse_set} is $O(th)$ for $t,h\ge1$.
\end{proof}

\subsection{Completing the proof of the main theorem}
\label{sec:complete_main_proof}

We now combine the sparse-set bound with the greedy algorithm to prove the main theorem.
\begin{theorem}[Formal version of Theorem~\ref{thm:main_informal}]
\label{thm:main_formal}
Let $G$ be a graph on $N$ vertices, and let $t$ be an integer with $1\le t\le\sqrt{\log N}/12$. If $\alpha(G)\ge2^{-t}N$, then a deterministic algorithm finds an independent set of size at least $2^t$ in time $N^{O(t)}$. Consequently, a deterministic algorithm running in time $N^{O(\sqrt{\log N})}$ outputs an independent set of size at least $\alpha(G)2^T/N$, where $T:=\lfloor\sqrt{\log N}/12\rfloor$.
\end{theorem}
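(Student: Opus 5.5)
We apply Lemma~\ref{lem:large_sparse_set} with $h:=t+1$ and then run the greedy algorithm of Lemma~\ref{lem:greedy} on $G[U]$. With $h=t+1$, the lemma's running time is $N^{O(t)}$, since the additive $O(\log(h+1))$ term is absorbed. It returns a nonempty $U$ with $d(U)\le2^{-(t+1)}$ and $|U|\ge2^{-(2K'+1)(t+2)}N$. By Lemma~\ref{lem:greedy} together with Eq.~\eqref{eq:edge_density}, the greedy output $I$ satisfies
\[
|I|\ge\frac{|U|}{1+d(U)|U|}=\frac{1}{|U|^{-1}+d(U)}.
\]
So it suffices to show $|U|\ge2^{t+1}$. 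Then $|U|^{-1}+d(U)\le2^{-(t+1)}+2^{-(t+1)}=2^{-t}$, which gives $|I|\ge2^t$.

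The main step is verifying $|U|\ge2^{t+1}$ in the range $t\le\sqrt{\log N}/12$. From the end of the proof of Lemma~\ref{lem:large_sparse_set}, $K'\le4h+13=4t+17$, so $2K'+1\le8t+35$. Since $t\ge1$, we have $t+2\le3t$ and $8t+35\le43t$, so the exponent $(2K'+1)(t+2)$ is at most $129t^2$. Requiring $\log N-129t^2\ge t+1$ would need $\log N\gtrsim130t^2$, but the hypothesis only gives $\log N\ge144t^2$. This holds with room to spare: $144t^2-129t^2=15t^2\ge t+1$ for $t\ge1$. The constant $12$ is exactly what makes this work, so I will redo the arithmetic carefully, perhaps using the sharper bound $(8t+35)(t+2)=8t^2+51t+70\le129t^2$, to confirm there is slack.

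One subtlety is that Lemma~\ref{lem:large_sparse_set} may output a singleton when $N$ is small. That case is harmless, because the size bound Eq.~\eqref{eq:large_sparse_set} still holds, and the computation above uses only that bound.

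For the consequence, set $T=\lfloor\sqrt{\log N}/12\rfloor$. If $T=0$, output any single vertex; this has size $1\ge\alpha(G)/N=\alpha(G)2^T/N$. Otherwise, run the $N^{O(t)}$ algorithm for each $t=1,\dots,T$, verify each output is an independent set, and return the largest one found or a single vertex. The total time is $N^{O(T)}=N^{O(\sqrt{\log N})}$.

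For correctness, let $t^*$ be the largest $t\le T$ with $\alpha(G)\ge2^{-t}N$, if one exists.
\begin{itemize}
\item If $t^*=T$, we obtain $2^T\ge\alpha(G)2^T/N$, since $\alpha(G)\le N$.
\item If $t^*<T$, then $\alpha(G)<2^{-(t^*+1)}N$. The run at $t^*$ gives $2^{t^*}$, which exceeds $\alpha(G)2^{t^*+1}/N$ and hence suffices up to the constant factor.
\end{itemize}
To get the exact bound $\alpha(G)2^T/N$, I would also handle the case $\alpha(G)<2^{-T}N$ directly. I would run the runs at all $t\le T$ and additionally compare against exhaustive enumeration of independent sets of size up to about $2^T$. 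Such sets can be enumerated in time $N^{2^T}$, which is too large. Instead I would note that when $\alpha(G)<2^{-t}N$ for all $t\ge1$, we have $\alpha(G)<N/2$.

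The cleanest route is therefore to take the best output over all $t$ and compare it against the guarantee case by case, absorbing factors of $2$ by shifting $t$ by one. This bookkeeping, not the main inequality, is the step I expect to need care.
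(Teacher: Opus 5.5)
Your proof of the first claim is correct and matches the paper's: Lemma~\ref{lem:large_sparse_set} with $h=t+1$, the estimate $(2K'+1)(t+2)\le129t^2$, the slack $144t^2-129t^2\ge t+1$ giving $|U|\ge2^{t+1}$, and greedy extraction on $G[U]$.

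The gap is in the ``Consequently'' clause. You leave it unproved, and the obstacle you anticipate comes from reading the monotonicity of the promise backwards. The condition $\alpha(G)\ge2^{-t}N$ gets weaker as $t$ increases, so if it holds for some $t\le T$ it holds for $t=T$. Your case $t^*<T$ is therefore vacuous, since it would require $2^{-t^*}N\le\alpha(G)<2^{-(t^*+1)}N$, and the weaker bound you derive there is never needed. The case that actually needs handling is the one you skip, where no $t^*$ exists, namely $\alpha(G)<2^{-T}N$. It is trivial: then $\alpha(G)2^T/N<1$, so a single vertex already meets the required bound. Exhaustive enumeration and the observation $\alpha(G)<N/2$ are beside the point.

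The paper's argument is exactly this two-case split. Run once with $t=T$. If $\alpha(G)\ge2^{-T}N$, the output has size $2^T\ge\alpha(G)2^T/N$ because $\alpha(G)\le N$; otherwise, output a singleton. Your loop over all $t\le T$ with a singleton fallback would also work, but only via this same argument.

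You should also say why the run at $t=T$ halts within $N^{O(T)}$ time when the promise fails. The SDP step of Lemma~\ref{lem:compute_pseudo_distribution} and the finite enumeration in Lemma~\ref{lem:large_sparse_set} do not depend on the promise, and the enumeration reports failure if no candidate passes its tests.
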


\begin{proof}
Apply Lemma~\ref{lem:large_sparse_set} with $h:=t+1$. Then $\ell\le h+3=t+4$. Since $\log(16/3)<5/2$,
\[
\Lambda'=h+\log\frac{16}{3}\le t+\frac72.
\]
The first step uses the definition of $\Lambda'$, and the second step uses $h=t+1$ and the preceding logarithm bound. Therefore
\[
K'=2(\ell+1)+2\Lambda'\le4t+17.
\]
The first step uses the definition of $K'$, and the second step substitutes the bounds on $\ell$ and $\Lambda'$. It follows that
\[
(2K'+1)(t+2)\le(8t+35)(t+2)\le43t\cdot3t=129t^2.
\]
The first step uses $K'\le4t+17$. The second step uses $8t+35\le43t$ and $t+2\le3t$, which follow from $t\ge1$. The third step multiplies the constants. By Eq.~\eqref{eq:large_sparse_set}, the output $U$ therefore satisfies $|U|\ge2^{-129t^2}N$ and $d(U)\le2^{-(t+1)}$.

The restriction on $t$ gives $144t^2\le\log N$. Also $t+1\le2t^2$ for integers $t\ge1$, so
\[
129t^2+t+1\le131t^2\le\log N.
\]
The first step uses $t+1\le2t^2$, and the second step uses $131\le144$ and $144t^2\le\log N$. Consequently $|U|\ge2^{t+1}$.

Run the greedy algorithm of Lemma~\ref{lem:greedy} on $G[U]$. Since $2|E(G[U])|/|U|=d(U)|U|$, it returns an independent set of $G$ with size at least
\[
\frac{|U|}{1+d(U)|U|}=\frac{1}{|U|^{-1}+d(U)}\ge\frac{1}{2^{-(t+1)}+2^{-(t+1)}}=2^t.
\]
The first step divides numerator and denominator by $|U|$. The second step uses $|U|\ge2^{t+1}$ and $d(U)\le2^{-(t+1)}$, and the third step simplifies the denominator. The running time is $N^{O(t)}$.

For the approximation guarantee, if $T=0$, output any vertex; its size is at least $\alpha(G)/N$. Otherwise, run the algorithm above with $t:=T$. If it returns an independent set of size at least $2^T$, output it; otherwise output a single vertex. If $\alpha(G)\ge2^{-T}N$, the output has size at least $2^T$, which is at least $\alpha(G)2^T/N$ because $\alpha(G)\le N$. If $\alpha(G)<2^{-T}N$, a singleton has size greater than $\alpha(G)2^T/N$. The algorithm runs in time $N^{O(T)}$ whether or not the promise holds: the SDP computation and the finite enumeration do not depend on the promise, and the enumeration reports failure if it finds no suitable candidate. This establishes the claimed running time $N^{O(\sqrt{\log N})}$.
\end{proof}

\subsection{A Limitation of the Analysis}
\label{sec:lasserre_limit}

We use the total singleton mass to measure the value of the relaxation at a given level.

\begin{definition}[Lasserre value]
\label{def:lasserre_value}
Let $G=(V,E)$ be a graph and let $r\ge1$ be an integer. The level-$r$ Lasserre value of $G$ is
\[
\operatorname{Las}_r(G):=\max\{\sum_{u\in V}\mu(u):\mu\text{ is a level-$r$ pseudo-distribution for $G$}\}.
\]
\end{definition}
The proof of Theorem~\ref{thm:main_formal} uses $\alpha(G)\ge2^{-t}N$ only through the pseudo-distribution computed in Lemma~\ref{lem:compute_pseudo_distribution}. The proof of that lemma approximates $\operatorname{Las}_{2t+5}(G)$ from below to additive error $1/2$. Thus the same argument proves, for $1\le t\le\sqrt{\log N}/12$,
\begin{equation}
\label{eq:lasserre_implication}
\operatorname{Las}_{2t+5}(G)\ge2^{-t}N\quad\Longrightarrow\quad\alpha(G)\ge2^t.
\end{equation}
The following consequence of Tulsiani's integrality-gap theorem limits the parameter range in which this implication can hold for every graph.

\begin{proposition}[A limit for implications from the Lasserre value]
\label{prop:lasserre_barrier}
There are constants $c,C>0$ and infinitely many integers $N\ge4$ with the following property. Write
\[
q_N:=\sqrt{\log N\log\log N},\qquad R_N:=\lfloor2^{cq_N}\rfloor.
\]
There exist a graph $G$ on $N$ vertices and an integer $T$ with $1\le T\le Cq_N$ such that, simultaneously for every integer $1\le r\le R_N$,
\begin{equation}
\label{eq:lasserre_barrier_counterexample}
\operatorname{Las}_r(G)\ge2^{-T}N\qquad\text{and}\qquad\alpha(G)<2^T.
\end{equation}
\end{proposition}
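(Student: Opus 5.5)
The plan is to invoke Tulsiani's integrality-gap theorem~\cite{t09} as a black box and translate its parameters into the form of Eq.~\eqref{eq:lasserre_barrier_counterexample}. The form I would use: there are absolute constants $c_0,c_1>0$ such that for infinitely many $N$ there is an $N$-vertex graph $G$ with $\alpha(G)\le N/2^{c_1 q_N}$ (more precisely, with $\alpha(G)\le 2^{c_1' q_N}$ for a suitable constant) and a pseudo-distribution that is feasible for at least $2^{c_0 q_N}$ rounds and has objective value at least $N/2^{c_2 q_N}$. Tulsiani obtains these graphs as FGLSS graphs of Max-$k$-CSP instances with $k\approx 2^{\Theta(\sqrt{\log N/\log\log N})}$. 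His gap $N/2^{O(q_N)}$ is the ratio $\operatorname{Las}/\alpha$, so I would extract from his construction both absolute quantities: the vector-solution value $\operatorname{Las}_R(G)\ge N/2^{a q_N}$ and the integral value $\alpha(G)\le 2^{b q_N}$, for constants $a,b$.

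The steps are as follows. First, I would check that Tulsiani's Lasserre vectors define a pseudo-distribution in the sense of Definition~\ref{def:pseudo_distribution}: set $\mu(T):=\langle v_{A},v_{B}\rangle$ for any split $T=A\cup B$, and use the consistency of the vector solution together with the vanishing on edges. His level count may differ from ours by a factor $2$ in the set sizes, which is absorbed by shrinking $c$. Second, I would note monotonicity: a level-$R$ pseudo-distribution restricts to a level-$r$ pseudo-distribution for every $r\le R$, since its moment matrix is a principal submatrix. Hence $\operatorname{Las}_r(G)\ge\operatorname{Las}_{R_N}(G)$ for all $1\le r\le R_N$, which gives the "simultaneously for every $r$" clause. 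Third, I would choose $T:=\lceil\max(a,b)q_N\rceil+1$. Then $2^{-T}N\le N/2^{aq_N}\le\operatorname{Las}_r(G)$ and $\alpha(G)\le 2^{bq_N}<2^T$. Setting $C:=\max(a,b)+2$ gives $1\le T\le Cq_N$, once $N\ge4$ so that $q_N\ge1$.

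The main obstacle is the first step: stating precisely which quantities Tulsiani's theorem controls. His theorem is phrased as a ratio, so I would have to open the construction and read off the two separate bounds. The soundness of the CSP gives $\alpha(G)\le s\cdot m$, where $m$ is the number of constraints and $s=2^{-\Theta(k)}$ or similar. Perfect completeness of the vector solution gives the value $m$. Meanwhile $N=m\cdot 2^{O(k)}$ or so, and the relation $\log N\approx k^2\log k$-type must yield $\log(N/m)=O(q_N)$ and $\log(sm)=O(q_N)$. If the construction instead gives $\alpha$ only polynomially small relative to $N$ in some parameter regime, I would pad or pick the regime $k=\Theta(\sqrt{\log N/\log\log N})$ that Tulsiani uses to optimize the gap. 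The remaining verification, including the rounding in $R_N=\lfloor 2^{cq_N}\rfloor\ge1$, is routine.
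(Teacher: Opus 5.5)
Your outline matches the paper's proof in its routine parts. You convert Tulsiani's vectors into a pseudo-distribution via $\mu(A\cup B)=\langle u_A,u_B\rangle$ at half the number of rounds, with sets containing an edge vanishing because $u_e=0$. You restrict to every level $r\le R_N$ through principal submatrices, and you choose $T$ just above the exponent.

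The gap is the step you yourself flag as the main obstacle. Tulsiani's Theorem~5.3 is a ratio statement: after $2^{c_0q_N}$ rounds there is a feasible vector solution of value $B\ge\alpha(G)N/2^{s}$ with $s=C_0q_N$. To get the two absolute bounds $\operatorname{Las}_r(G)\ge N2^{-aq_N}$ and $\alpha(G)\le2^{bq_N}$, you plan to reopen his construction, but that analysis is never carried out:
\begin{itemize}
\item the parameter relations are hedged (``or so'', ``-type'');
\item the two regimes you name for $k$, namely $2^{\Theta(\sqrt{\log N/\log\log N})}$ and $\Theta(\sqrt{\log N/\log\log N})$, contradict each other;
\item the fallback (``pad or pick the regime'') is unspecified.
\end{itemize}
As written, the first step of your proof is unproved.

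The missing idea is that no opening is needed, because the ratio already forces both absolute bounds.
\begin{itemize}
\item Every singleton marginal is at most $1$ by Lemma~\ref{lem:basic_inequalities}-(a), so $B=\sum_{v}\mu(v)\le N$. Hence $\alpha(G)\le2^sB/N\le2^s$.
\item Since $N\ge4$, $\alpha(G)\ge1$, so $B\ge\alpha(G)N2^{-s}\ge N2^{-s}$.
\end{itemize}
Take $T:=\lfloor s\rfloor+1$ and $C:=C_0+1$. Then for every $1\le r\le R_N$ you get $\operatorname{Las}_r(G)\ge B\ge N2^{-s}>N2^{-T}$ and $\alpha(G)\le2^s<2^T$. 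This is exactly how the paper concludes. If you replace your first step with this two-line argument, the rest of your proposal goes through.
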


\begin{proof}
By Tulsiani's integrality-gap theorem~\cite[Theorem~5.3]{t09}, there are constants $C_0,c_0>0$ and graphs on arbitrarily large numbers $N$ of vertices for which the independent-set relaxation after $L_N:=\lfloor2^{c_0q_N}\rfloor$ rounds has a feasible vector solution of value
\begin{equation}
\label{eq:tulsiani_vector_gap}
B\ge\frac{\alpha(G)N}{2^s},\qquad s:=C_0q_N.
\end{equation}
Adjusting the constants accounts for the choice of logarithm base and for integer rounding. We first convert the vector solution into a pseudo-distribution, retaining half the original number of rounds, rounded down.

In Tulsiani's formulation, the vectors $u_A$ are indexed by subsets $A\subseteq V$ of size at most $L_N$. They satisfy $\|u_\emptyset\|=1$, the edge constraints $\langle u_{\{v\}},u_{\{w\}}\rangle=0$ for $\{v,w\}\in E$, and the consistency constraints
\[
\langle u_A,u_B\rangle=\langle u_C,u_D\rangle\quad\text{whenever }A\cup B=C\cup D.
\]
Each of the four sets must index an available vector. The objective is $B=\sum_{v\in V}\|u_{\{v\}}\|^2$.

Set $m:=\lfloor L_N/2\rfloor$, and take $N$ sufficiently large that $m\ge1$. For every $Q\subseteq V$ with $|Q|\le2m$, define
\[
\mu(Q):=\langle u_Q,u_\emptyset\rangle.
\]
The required vectors are available because $2m\le L_N$. Normalization gives $\mu(\emptyset)=1$. For $|A|,|B|\le m$, consistency gives
\[
\mu(A\cup B)=\langle u_{A\cup B},u_\emptyset\rangle=\langle u_A,u_B\rangle.
\]
The first step follows from the definition of $\mu$, and the second step follows from consistency. Thus the level-$m$ moment matrix is the Gram matrix of the vectors $u_A$ with $|A|\le m$ and is therefore positive semidefinite.

It remains to verify that the moments of sets containing an edge vanish. Let $e=\{v,w\}\in E$. Consistency and the edge constraint give
\[
\|u_e\|^2=\langle u_{\{v\}},u_{\{w\}}\rangle=0.
\]
The first step follows from consistency, and the second step follows from the edge constraint. Hence $u_e=0$. If $e\subseteq Q$ and $|Q|\le2m$, consistency again gives
\[
\mu(Q)=\langle u_Q,u_\emptyset\rangle=\langle u_Q,u_e\rangle=0.
\]
The first step follows from the definition of $\mu$, the second step follows from consistency and $Q\cup e=Q$, and the third step uses $u_e=0$. Therefore $\mu$ is a level-$m$ pseudo-distribution. Its objective is preserved: for each $v$, we have $\mu(v)=\langle u_{\{v\}},u_\emptyset\rangle=\|u_{\{v\}}\|^2$. The first step follows from the definition of $\mu$, and the second step follows from consistency. Summing over $v$ gives the objective $B$.

Choose $c:=c_0/2$. Integer rounding gives $m\ge2^{c_0q_N}/2-1$. For all sufficiently large $N$, this lower bound is at least $2^{c_0q_N/2}$, because $q_N\to\infty$. After discarding finitely many graph sizes, we may therefore assume $m\ge\lfloor2^{cq_N}\rfloor=R_N$. Restricting $\mu$ to any level $1\le r\le R_N$ preserves its objective, and hence
\[
\operatorname{Las}_r(G)\ge B\ge\alpha(G)N2^{-s}.
\]
The first step follows because restriction preserves feasibility and the objective; the second step follows from Eq.~\eqref{eq:tulsiani_vector_gap}. Also $B\le N$, because each singleton marginal of a feasible pseudo-distribution is at most $1$. Combining this with Eq.~\eqref{eq:tulsiani_vector_gap} gives $\alpha(G)\le2^s$.

Take $T:=\lfloor s\rfloor+1$ and $C:=C_0+1$. Since $N\ge4$ implies $q_N\ge1$, we have
\[
1\le T\le s+1\le Cq_N.
\]
The first step follows from $s>0$, the second step uses $\lfloor s\rfloor\le s$, and the third step uses $s=C_0q_N$ and $q_N\ge1$. Since $N\ge4$, we have $\alpha(G)\ge1$. Therefore, for every $1\le r\le R_N$,
\[
\operatorname{Las}_r(G)\ge\alpha(G)N2^{-s}\ge N2^{-s}>N2^{-T},\qquad\alpha(G)\le2^s<2^T.
\]
In the first chain, the first step follows from the objective bound above, the second step uses $\alpha(G)\ge1$, and the third step uses $T>s$. In the second chain, the first step follows by combining $B\le N$ with Eq.~\eqref{eq:tulsiani_vector_gap}, and the second step uses $T>s$. This proves Eq.~\eqref{eq:lasserre_barrier_counterexample}.
\end{proof}

To state the consequence precisely, let $t_{\max}(N)\ge1$ and $r(N,t)\ge1$ be integer-valued functions. Suppose that, for every $N\ge4$, every graph $G$ on $N$ vertices and every integer $1\le t\le t_{\max}(N)$, one has
\[
\operatorname{Las}_{r(N,t)}(G)\ge2^{-t}N\quad\Longrightarrow\quad\alpha(G)\ge2^t.
\]
Assume that the number of levels is subexponential in $q_N$, uniformly over this parameter range, in the precise sense that
\[
\max_{1\le t\le t_{\max}(N)}\log r(N,t)=o(q_N)\qquad\text{as }N\to\infty.
\]
For all sufficiently large $N$, every such level is at most $R_N$. For each sufficiently large graph size supplied by Proposition~\ref{prop:lasserre_barrier}, the associated integer $T$ must therefore satisfy $t_{\max}(N)<T\le Cq_N$: otherwise the proposed implication at $t=T$ contradicts Eq.~\eqref{eq:lasserre_barrier_counterexample}. This bounds the valid parameter range for infinitely many values of $N$. In particular, it rules out a universal range $t_{\max}(N)=\omega(\sqrt{\log N\log\log N})$ under the stated level bound. For any fixed $a>0$ and $\beta>1/2$, one has
\[
\frac{a(\log N)^\beta}{q_N}=\frac{a(\log N)^{\beta-1/2}}{\sqrt{\log\log N}}\longrightarrow\infty.
\]
The first step uses the definition of $q_N$, and the second step follows because a positive power of $\log N$ grows faster than $\sqrt{\log\log N}$. Hence a universal range $t_{\max}(N)\ge a(\log N)^\beta$ for all sufficiently large $N$ is impossible under the stated level bound. This is a restriction on this implication from the Lasserre value and does not exclude algorithms using additional properties of the graph.

\bibliographystyle{alpha}
\bibliography{ref}

\appendix

\section*{Acknowledgments and AI Disclosure}
The major idea of this paper comes from the author. Those ideas are well-known in the field of sparse Fourier transform \cite{hikp12,k16}. The author learnt those ideas from Eric Price and Michael Kapralov while the author was a Ph.D. student. The author mainly used Codex 6.0 to polish the presentation and improve english grammar for this draft. The author  would also like to thank David Zuckerman for explaining the details of \cite{z07} paper and PRG backgrounds while the author was a Ph.D. student. All of the proofs are carefully written and verified by the author.

\end{document}